\title{Nested Sequents for Quasi-transitive~Modal~Logics}
\author{Sonia Marin and Paaras Padhiar}
\date{May 2024}
\newtheorem{theorem}{Theorem}
\newtheorem{lemma}[theorem]{Lemma}
\newtheorem{proposition}[theorem]{Proposition}
\newtheorem{corollary}[theorem]{Corollary}
\theoremstyle{definition}
\newtheorem{definition}[theorem]{Definition}
\newcommand{\vlstr}[3]{\vltr{#1}{#2}{\vlhy{}}{#3}{\vlhy{}}}
\newcommand{\vlhtr}[2]{\vlstr{#1}{#2}{\vlhy{\hskip1.5em}}}
\newcommand{\AND}{\wedge}
\newcommand{\OR}{\vee}
\newcommand{\IMPLIES}{\supset}
\newcommand{\IMP}{\supset}
\newcommand{\DIA}{\Diamond}
\newcommand{\BAR}[1]{\bar{#1}}
\newcommand{\BOX}{\Box}
\newcommand{\K}{\mathsf{K}}
\newcommand{\seq}[2]{#1 \{ #2 \}}
\newcommand{\fax}[1]{\mathsf{4}^{#1}}
\newcommand{\form}{\text{form}}
\begin{document}

\maketitle

\begin{abstract}
Previous works by Goré, Postniece and Tiu have provided sound and cut-free complete proof systems for modal logics extended with path axioms using the formalism of nested sequent.
Our aim is to provide (i) a constructive cut-elimination procedure and (ii) alternative modular formulations for these systems. 
We present our methodology to achieve these two goals on a subclass of path axioms, namely quasi-transitivity axioms.
  \end{abstract}

\section{Introduction}
The proof theory of modal logics has been explored thoroughly and many authors have contributed to the deep understanding gathered to this day.
In particular, it has been remarked time and time again that in order to capture the validities of a modal logic, additional structure, often inspired by the semantics of the logic itself, is required within the proof-theoretical syntax.
This led to the development of many formalisms extending Gentzen's sequent calculus, such as hypersequents~\cite{avron_method_1996}, nested sequents~\cite{brunnler_deep_2009,poggiolesi_gentzen_2011}, and labelled sequents~\cite{negri_proof_2005}.

It is not always clear however what sort of additional structure is precisely required to design the proof theory of a modal logic. 
For example, modal logic $\mathsf{S5}$ can be expressed using labelled or nested sequents, but can also be given a sound and complete system in the lighter hypersequent formalism, whereas such a result is conjectured not to be possible in ordinary sequent calculus~\cite{lellmann_axioms_2014}.

Goré, Postniece and Tiu~\cite{gore_correspondence_2011} have proposed a general algorithm to design nested sequent systems for modal logic $\K$\footnote{Their work takes place in the context of tense logic where the language contains also adjoint modalities, but we restrict our attention to the language with only $\BOX$ and $\DIA$.} extended with \emph{path axioms}, of the form $\DIA^n A \IMP\BOX^l\DIA A$.
As a subclass of Scott-Lemmon axioms $\DIA^n\BOX^m A \IMP\BOX^l\DIA^k A$, they enjoy a well-behaved correspondence with the frame conditions displayed on the left of Figure~\ref{fig:frame-cond}, 
i.e., if $uR^nv$ and $uR^lw$ then $vRw$~\cite{lemmon_introduction_1977}.

In this line of work, we set out to understand more precisely the systems from~\cite{gore_correspondence_2011} proof theoretically, in particular we are after a methodology to (i) equip them with a constructive \emph{cut-elimination} procedure and (ii) distill them into \emph{modular} systems, i.e., such that each axiom corresponds to a (set of) rule(s) which can be freely mixed with others without adaptation depending on the other axioms present.

In this report, we focus on a restricted class of path axioms, which we call \emph{quasi-transitivity}, that is, when $l = 0$ and $\fax{n}: \DIA^n A \IMP \DIA A$,
which correspond to the frame conditions displayed on the right of Figure~\ref{fig:frame-cond}.
Similarly to~\cite{gore_correspondence_2011}, we work in the setting of \emph{nested sequents}~\cite{brunnler_deep_2009,poggiolesi_gentzen_2011}. 
After some preliminaries in Section~\ref{sec:prelim}, we provide a cut-elimination argument for the nested sequent systems given by~\cite{gore_correspondence_2011} in Section~\ref{sec:cutelim} and we present an alternative modular cut-free system in Section~\ref{sec:modular}.

\begin{figure*}
\centering
\[\begin{tikzcd}
	& v \\
	u &&& u \\
	& w &&& w
	\arrow["vRw", dashed, from=1-2, to=3-2]
	\arrow["{{uR^nv}}", from=2-1, to=1-2]
	\arrow["{{uR^lw}}"', from=2-1, to=3-2]
	\arrow["{{uR^nw}}"', from=2-4, to=3-5]
	\arrow["uRw", curve={height=-12pt}, dashed, from=2-4, to=3-5]
\end{tikzcd}\]
\caption{Frame conditions for path axioms and quasi-transitivity axioms}
\label{fig:frame-cond}
\end{figure*}
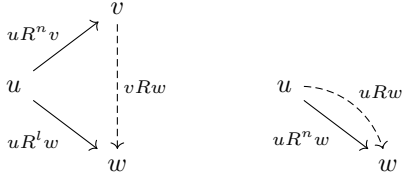

\section{Preliminaries}\label{sec:prelim}
\begin{definition}
    We denote by $\mathsf{Atm}$ a countable set of propositional variables. \emph{Formulas}, denoted $A, B, C, \dots$, are written in negation normal form, defined by the grammar:
    $$A\coloncolonequals p \in \mathsf{Atm} \mid \BAR{p} \mid (A \AND A) \mid (A \OR A) \mid \BOX A \mid \DIA A$$
    For $p\in \mathsf{Atm}$, $\BAR{p}$ denotes its negation. We define $\bot \colonequals p_0 \AND \BAR {p_0}$ and $\top \colonequals p_0 \OR \BAR{p_0}$ for a fixed $p_0 \in \textsf{Atm}$. We define the negation $\BAR{A}$ of $A$ inductively by de Morgan duality.
    $A \IMP B$ is defined as $\BAR{A} \OR B$.  We omit the outermost parentheses of a formula.
\end{definition}

\begin{definition}
    Let $A$ be a formula. The \emph{degree} of the formula $A$, $\text{deg}(A)$, is defined inductively:
    \begin{itemize}
        \item if $A = p$ or $\BAR{p}$ for some $p \in \mathsf{Atm}$, $\text{deg}(A)=0$;
        \item if $A = B \AND C$ or $B \OR C$ for some formulas $B$ and $C$, $\text{deg}(A)=\text{deg}(B)+\text{deg}(C)$;
        \item if $A = \BOX B$ or $\DIA B$ for some formula $B$, $\text{deg}(A)=1+\text{deg}(B)$.
    \end{itemize}
\end{definition}

Modal logic $\K$ is an extension of classical propositional logic with the following axiom schema, and the inference rules \emph{modus ponens} and \emph{necessitation}~\cite{blackburn_modal_2001}:
$$
\mathsf k : \BOX(A \IMP B) \IMP (\BOX A \IMP \BOX B) 
\qquad
\vliinf{\mathsf{mp}}{}{B}{A\IMP B}{A} 
\qquad
\vlinf{\mathsf{nec}}{}{\BOX A}{A}
$$

We concern ourselves with extensions of $\K$ with quasi-transitivity axioms:
$$
  \path{n}: \DIA^n A \IMP \DIA A
$$
for natural numbers $n>1$, and a nested sequent system sound and cut-free complete with these logics.

\begin{definition}
    A \emph{nested sequent} is a finite multiset of formulas and boxed sequents, that is, expressions of the form $[\Gamma]$ where $\Gamma$ is a nested sequent. In other words, a nested sequent is of the form
    $$\Gamma\coloncolonequals\varnothing \mid A,\Gamma \mid \Gamma,[\Gamma]$$
    The corresponding formula of nested sequent $\Gamma$, $\form(\Gamma)$, is defined inductively:
    \begin{itemize}
        \item $\form(\varnothing)=\bot$
        \item $\form(A,\Gamma_1)=A\OR \form(\Gamma_1)$
        \item $\form(\Gamma_1, [\Gamma_2]) = \form(\Gamma_1) \OR \BOX \form(\Gamma_2)$.
    \end{itemize}
\end{definition}

\begin{definition}
   A \emph{context} is a nested sequent with one or several holes. A hole~$\{ \ \}$ takes the place of a formula in the sequent but does not occur
inside a formula. 
We write $\Gamma \{\Gamma_1 \}$ when we replace the hole in $\Gamma \{ \ \}$ by $\Gamma_1$.
\end{definition}

\begin{definition}
    The \emph{depth} of a context $\Gamma \{ \ \}$ is defined inductively:
    \begin{itemize}
        \item If $\Gamma \{ \ \} = \Gamma_1,  \{ \ \}$ for some nested sequent $\Gamma_1$, $\text{depth}(\Gamma \{ \ \})=0$;
        \item If $\Gamma \{ \ \} = \Gamma_1, [\Gamma_2  \{ \ \}]$ for some nested sequent $\Gamma_1$ and context $\Gamma_2  \{ \ \}$, $\text{depth}(\Gamma \{ \ \})=1+\text{depth}(\Gamma_2 \{ \ \})$.
    \end{itemize}
\end{definition}

\begin{definition}
A \emph{proof} of a nested sequent $\Gamma$ is a finite tree labelled by nested sequents where $\Gamma$ is the root of the tree and:
\begin{itemize}
    \item it has no children if it is an instance of the $\mathsf{id}$ rule;
    \item or its children are $\Gamma_1, \dots, \Gamma_n$ if it is the conclusion of the inference rule
    $$\vliiinf{\rho}{}{\Gamma}{\Gamma_1}{\dots}{\Gamma_n}$$
\end{itemize}

A sequent $\Gamma$ is provable in system $\mathsf{N}$ denoted $\mathsf{N} \vdash \Gamma$ if there is a proof of it in $\mathsf{N}$.

Figure~\ref{fig:nk} gives the rules for the nested sequent system $\mathsf{nK}$.
Figure~\ref{fig:cut} is the nested sequent version of the $\mathsf{cut}$ rule and Figure~\ref{fig:path} gives the rules $\DIA_{\mathsf{k}n}$ which will be used to treat quasi-transitivity in this system, following~\cite{gore_correspondence_2011}.

For any $\mathsf{X} \subseteq  \{n \in \mathbb{N}: n>1 \}$, let us write $\path{\mathsf{X}} \colonequals \{ \path{n} : n \in \mathsf{X} \}$ and $\DIA_{\mathsf{k}\mathsf{X}} \colonequals \{ \DIA_{\mathsf{k}n} : n \in \mathsf{X} \}$.
\end{definition}

\begin{definition}
    Let $\Gamma$ be a nested sequent and $\pi$ be a proof of~$\Gamma$ in a nested sequent system. The \emph{height} of a proof, $h(\pi)$ is defined inductively:
    \begin{itemize}
        \item If $\Gamma$ is a conclusion of the $\mathsf{id}$ rule in $\pi$, then $h(\pi)=0$;
        \item If $\Gamma$ is a conclusion of an inference rule $\rho$ in $\pi$, where the proof is of the form
        $$\vlderivation
        {
            \vliiin{\rho}{}
            {\Gamma}
            {\vlhtr{\pi_1}{\Gamma_1}}
            {\vlhy{\dots}}
            {\vlhtr{\pi_n}{\Gamma_n}}
        }
        $$
        then $h(\pi)=1+\text{max}(h(\pi_1), \dots , h(\pi_n))$.
    \end{itemize}
\end{definition}

\begin{figure}
    \centering
    $
    \vlinf{\mathsf{id}}{}{\Gamma\{p,\bar{p} \}}{}
    \quad
    \vlinf{\OR}{}{\Gamma\{A \OR B \}}{\Gamma\{A, B \}}
    \quad
    \vliinf{\AND}{}{\Gamma\{A \AND B \}}{\Gamma\{A \}}{\Gamma\{B \}}
    \quad
    \vlinf{\BOX}{}{\Gamma\{\BOX A \}}{\Gamma\{[A] \}}
    \quad
    \vlinf{\DIA}{}{\Gamma\{\DIA A, [\Delta] \}}{\Gamma\{\DIA A, [A, \Delta] \}}
    $
    \caption{System $\mathsf{nK}$}
    \label{fig:nk}

    \bigskip
    \centering
    $\vliinf{\mathsf{cut}}{}{\Gamma \{ \varnothing \}}{\Gamma \{ A \}}{\Gamma \{ \BAR{A} \}}$
    \caption{Cut rule}
    \label{fig:cut}

    \bigskip
    \centering
    $\vlinf{\DIA_{\mathsf{k}n}}{}
    {\Gamma  \{ 
\DIA A, [\Delta_1, [\dots, [\Delta_{n-1},[\Delta_{n}]] \dots ]] \}}
    {\Gamma  \{ 
\DIA A, [\Delta_1, [\dots, [\Delta_{n-1},[A, \Delta_{n}]] \dots ]] \}}$
    \caption{Modal propagation rules $\DIA_{\mathsf{k}n}$ where $n\geq 1$}
    \label{fig:path}

\end{figure}

\begin{definition}
    Let $\mathsf{X} \subseteq  \{n \in \mathbb{N}: n>1 \}$. Let $\Gamma \{ \ \}$ be a context. Let $A$ be a formula. Let $r \in \mathbb{N}$. The \emph{cut-rank} of a $\mathsf{cut}$ is the degree of the \emph{cut-formula} $A$.
    When $\text{deg}(A) \leq r$, we may write $\mathsf{cut}_r$ to be explicit.
    Given a proof $\pi$ in $\mathsf{nK} + \DIA_{\mathsf{k}\mathsf{X}} + \mathsf{cut}$, the \emph{cut-rank} of $\pi$ denoted $r(\pi)$ of the proof is the supremum of the cut-ranks of cuts used in the proof. We say the proof is in $\mathsf{nK} + \DIA_{\mathsf{k}\mathsf{X}} + \mathsf{cut}_r$ when the cut-rank of the proof is less than or equal to $r$.
\end{definition}

\begin{definition}
    A rule of the form $\vliiinf{\rho}{}{\Gamma}{\Gamma_1}{\dots}{\Gamma_n}$ is (resp.~\emph{height preserving}) (resp.~\emph{cut-rank preserving}) \emph{admissible} in a nested sequent system $\mathsf{N}$ if whenever there are proofs $\pi_i$ of $\Gamma_i$ in $\mathsf N$
    for all $i \in \{ 1, \dots, n\}$, then there is a proof $\pi$ of $\Gamma$ in $\mathsf N$ (resp.~such that $h(\pi) \leq h(\pi_i)$) (resp.~such that $r(\pi) \leq r(\pi_i)$). 
\end{definition}

For each rule $\vliiinf{\rho}{}{\Gamma}{\Gamma_1}{\dots}{\Gamma_n}$, when $n>0$, its inverses $\vlinf{\rho^{-1}_i}{}{\Gamma_i}{\Gamma}$ are obtained for each $i\le n$ by ``exchanging" premiss and conclusion. The inverse rules of $\mathsf{nK}$ are shown in Figure \ref{fig:inv}. 

We adapt Br\"unnler's cut-elimination proof for nested sequent systems for logics of the modal $\mathsf{S5}$ cube \cite{brunnler_deep_2009}, similarly utilising height-preserving and cut-rank preserving admissibility of the inverse rules, as well as contraction and weakening rules. 

\begin{figure}
    \centering
    $\vlinf{\mathsf{gid}}{}{\Gamma  \{ A, \BAR{A} \}}{}$
    \caption{Generalised identity rule}
    \label{fig:gid}

    \bigskip
    \centering
    $
    \vlinf{\OR^{-1}}{}{\Gamma  \{ A, B\}}{\Gamma  \{ A \OR B\}}
    \quad
    \vlinf{\AND^{-1}_{i}}{\text{\footnotesize $i \in \{ 1, 2 \}$}}{\Gamma  \{ A_i \}}{\Gamma  \{ A_1 \AND A_2 \}}
    \quad
    \vlinf{\BOX^{-1}}{}{\Gamma  \{ [A] \}}{\Gamma  \{ \BOX A \}}
    \quad
    \vlinf{\DIA^{-1}}{}{\Gamma\{\DIA A, [A, \Delta] \}}{\Gamma\{\DIA A, [\Delta] \}}
    $
    \caption{Inverse rules for $\mathsf{nK}$}
    \label{fig:inv}

    \bigskip
    \centering
    $
    \vlinf{\mathsf{wk}}{}{\Gamma\{ \Delta \} }{\Gamma\{ \varnothing \}}
    \qquad
    \vlinf{\mathsf{cont}}{}{\Gamma\{ \Delta \} }{\Gamma\{ \Delta, \Delta \}}
    $
    \caption{Weakening and contraction rules}
    \label{fig:wkcont}
\end{figure}

\begin{proposition} \label{prop:invadm}
    Let $\mathsf{X} \subseteq  \{n \in \mathbb{N}: n>1 \}$. 
    \begin{itemize}
        \item The rule $\mathsf{gid}$ is admissible in $\mathsf{nK}+\DIA_{\mathsf{k}\mathsf{X}}$. 
        \item The inverse rules of $\mathsf{nK}+\DIA_{\mathsf{k}\mathsf{X}}+ \mathsf{cut}$, $\mathsf{wk}$ and $\mathsf{cont}$ are cut-rank and height preserving admissible in $\mathsf{nK}+\DIA_{\mathsf{k}\mathsf{X}} + \mathsf{cut}$.
    \end{itemize}
\end{proposition}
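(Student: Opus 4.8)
The plan is to treat the two bullets separately: the admissibility of $\mathsf{gid}$ by induction on the structure of the formula, and the second bullet by a sequence of inductions on proof height.

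For $\mathsf{gid}$ I would induct on $\text{deg}(A)$. When $A$ is a propositional variable, $\mathsf{gid}$ is literally an instance of $\mathsf{id}$. For $A = B \AND C$ we have $\BAR A = \BAR B \OR \BAR C$, so reading the rules bottom-up, one application of $\OR$ reduces the goal to $\Gamma\{B \AND C, \BAR B, \BAR C\}$ and one application of $\AND$ splits it into $\Gamma\{B, \BAR B, \BAR C\}$ and $\Gamma\{C, \BAR B, \BAR C\}$, both instances of $\mathsf{gid}$ on formulas of strictly smaller degree and hence available by the induction hypothesis; the case $A = B \OR C$ is dual. For $A = \BOX B$ we have $\BAR A = \DIA \BAR B$, and applying $\BOX$ followed by $\DIA$ reduces $\Gamma\{\BOX B, \DIA \BAR B\}$ to $\Gamma\{\DIA \BAR B, [\BAR B, B]\}$, an instance of $\mathsf{gid}$ on $B$; the case $A = \DIA B$ is dual.

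For the second bullet I would establish the families in the order weakening, then inverse rules, then contraction, each by induction on the height of the given proof, noting that cut-rank preservation is automatic since none of the transformations introduces a cut or enlarges a cut-formula. For $\mathsf{wk}$: every rule of $\mathsf{nK} + \DIA_{\mathsf{k}\mathsf{X}} + \mathsf{cut}$ is closed under arbitrary context, so I push the weakened material up through the last rule and appeal to the induction hypothesis, with $\mathsf{id}$ remaining an $\mathsf{id}$ in the weakened context. For the inverse rules of Figure~\ref{fig:inv}: if the last rule is the one whose principal formula is being decomposed, its premise already is the desired sequent and the height strictly drops; otherwise the last rule commutes with the inverse and the induction hypothesis applies to its premises. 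The inverses of $\mathsf{cut}$ and of each $\DIA_{\mathsf{k}n}$ only insert a formula into a context, so they are already subsumed by $\mathsf{wk}$. Finally, for $\mathsf{cont}$, contracting $\Gamma\{\Delta, \Delta\}$ to $\Gamma\{\Delta\}$: when the last rule acts entirely within $\Gamma$ I commute and use the induction hypothesis; the interesting case is when the last rule $\rho$ acts inside one copy of $\Delta$ and breaks their symmetry. For $\rho \in \{\OR, \AND, \BOX\}$, which hide material going downward, I reopen the matching position in the untouched copy using the corresponding height-preserving inverse rule so the two copies coincide, contract them by the induction hypothesis at strictly smaller height, and reapply $\rho$. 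For the propagation rules $\DIA$ and $\DIA_{\mathsf{k}n}$, which keep $\DIA A$ in place and merely insert a copy of $A$ into a possibly deeply nested bracket, there is no suitable inverse, so I instead use height-preserving $\mathsf{wk}$ to insert the same $A$ into the corresponding bracket of the untouched copy, contract by the induction hypothesis, and recover $\Gamma\{\Delta\}$ by one downward application of the propagation rule.

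I expect contraction to be the main obstacle, and within it the case where $\Delta$ itself contains a boxed sequent, so that the induction hypothesis is asked to contract two nested copies $[\Sigma], [\Sigma]$ into one. This forces the symmetrisation argument to be carried out at arbitrary depth inside the nesting, and it is exactly here that the absence of a dedicated inverse for the propagation rules $\DIA_{\mathsf{k}n}$ is felt. The device of re-propagating into the untouched copy by weakening, rather than inverting out of the modified one, is what keeps every step height-preserving and allows the nested induction to go through.
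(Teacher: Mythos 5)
Your proposal is correct and follows essentially the same route as the paper: $\mathsf{gid}$ by induction on $\mathrm{deg}(A)$ with the identical $\OR/\AND$ and $\BOX/\DIA$ cases, the inverse rules $\OR^{-1}$, $\AND^{-1}_i$, $\BOX^{-1}$ by height induction with the principal/commute case split, and the inverses of $\DIA$, $\DIA_{\mathsf{k}n}$ and $\mathsf{cut}$ dismissed as instances of weakening. The only difference is one of detail: where the paper defers $\mathsf{wk}$ and $\mathsf{cont}$ to Br\"unnler's proofs (noting they adapt without special treatment of the propagation rules), you spell out the contraction argument, and your weakening-based re-propagation device for the $\DIA_{\mathsf{k}n}$ cases is exactly the standard treatment being invoked there.
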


\begin{proof}
For $\mathsf{gid}$, we show the rule is admissible by induction on $\text{deg}(A)$. 

The base case $\text{deg}(A)=0$: $A = p$ or $\BAR{p}$ for some $p \in \mathsf{Atm}$, and so the rule is an instance of the $\mathsf{id}$ rule.

Assume the result holds for formulas $A$ where $\text{deg}(A) < d$ for some natural number $d > 0$.
When $\text{deg}(A)=d$, we look at cases:
\begin{enumerate}
    \item $A = \BOX B$ or $\DIA \BAR{B}$ for some formula $B$ with $\text{deg}(B) < d$. We have
    $$
    \vlderivation{
       \vlin{\BOX}{}
       {\Gamma  \{ \BOX B, \DIA \BAR{B} \}}
       {
            \vlin{\DIA}{}
            {
                \Gamma  \{  [B], \DIA \BAR{B} \}
            }
            {
                \vlin{\mathsf{gid}}{*}
                {
                \Gamma  \{  [B, \BAR{B}], \DIA \BAR{B} \}
                }
                {
                \vlhy{}
                }
            }
       }
       }
    $$ where $*$ denotes where the inductive hypothesis is used.
    \item $A = B \AND C$ or $\BAR{B} \OR \BAR{C}$ for some formulas $B$ and $C$ where $\text{deg}(B),\text{deg}(C) < d$. We have
    $$
        \vlderivation
        {
        \vlin{\OR}{}
        {
        \Gamma  \{ B \AND C, \BAR{B} \OR \BAR{C} \}
        }
        {
            \vliin{\AND}{}
            {
            \Gamma  \{ B \AND C, \BAR{B}, \BAR{C} \}
            }
            {
                \vlin{\mathsf{gid}}{*}
                {
                \Gamma  \{ B, \BAR{B}, \BAR{C} \}
                }
                {
                \vlhy{}
                }
            }
            {
                \vlin{\mathsf{gid}}{*}
                {
                \Gamma  \{ C, \BAR{B}, \BAR{C} \}
                }
                {
                \vlhy{}
                }
            }
        }
        }
    $$ where $*$ denotes where the inductive hypothesis is used.
\end{enumerate}

For $\OR^{-1}$: 
suppose there is a proof $\pi$ of $\Gamma  \{ A \OR B\}$.
We show the rule $\OR^{-1}$ is height and cut-rank preserving admissible by induction on the height of the proof $\pi$.
For the base case $h(\pi)=0$, $\pi$ is of the form $$\vlinf{\mathsf{id}}{}{\Gamma  \{ A \OR B\} \{ p, \BAR{p} \}}{}$$ for some $p \in \mathsf{Atm}$. We have the following proof: $$\vlinf{\mathsf{id}}{}{\Gamma  \{ A, B\} \{ p, \BAR{p} \}}{}$$ where we note the height and cut-rank is preserved.

Assume $\OR^{-1}$ is height-preserving and cut-rank admissible for proofs $\pi$ of $h(\pi) < h$ for some natural number $h>0$.

When $h(\pi)=h$, we look at cases:
\begin{itemize}
    \item Case I: $\pi$ is of the form:
    $$
    \vlderivation
    {
    \vlin{\OR}{}
    {
        \Gamma  \{ A \OR B\}
    }
    {
    \vlhtr{\pi'}{\Gamma  \{ A, B\}}
    }
    }
    $$
    for some proof $\pi'$. Then we have:
    $$
    \vlderivation{
        \vlhtr{\pi'}{\Gamma  \{ A, B\}}
        }
    $$
    where cuts of higher rank are not introduced and height is preserved with $h(\pi') = h(\pi) - 1 \leq h(\pi)$.
    \item Case II: $\pi$ is of the form:
    $$
    \vlderivation
    {
        \vliiin{\rho}{}
        {
            \Gamma \{ A \OR B \} 
        }
        {
            \vlhtr{\pi_1}{\Gamma_1 \{ A \OR B \}}
        }
        {
            \vlhy{\dots}
        }
        {
            \vlhtr{\pi_m}{\Gamma_m \{ A \OR B \}}
        }
    }
    $$
    for some rule $\rho$, some proofs $\pi_1, \dots , \pi_m$. Then we have a proof $\pi'$ of $\seq{\Gamma}{A,B}$ using the inductive hypothesis on proofs $\pi_1, \dots , \pi_m$:
    $$
    \vlderivation
    {
        \vliiin{\rho}{}
        {
            \Gamma \{ A , B \}
        }
        {
            \vlin{\OR^{-1}}{*}
            {
                \Gamma_1 \{ A , B \}
            }
            {\vlhtr{\pi_1}{\Gamma_1 \{ A \OR B \} }}
        }
        {
            \vlhy{\dots}
        }
        {
            \vlin{\OR^{-1}}{*}
            {
                \Gamma_m \{ A , B \} 
            }
            {\vlhtr{\pi_m}{\Gamma_m \{ A \OR B \}}}
        }
    }
    $$
    where $*$ denotes where the inductive hypothesis is used. Here cuts of higher rank are not introduced and the height is preserved - the heights of the proofs of $\Gamma_j \{A, B \}$ are $h(\pi_j)$ as $\OR^{-1}$ is height preserving by the inductive hypothesis, and so $h(\pi')=1+\text{max}(h(\pi_1), \dots, h(\pi_m)) = h(\pi)$.
    \end{itemize}

    For $\AND^{-1}_i$ and $\BOX^{-1}$: the proofs of the height and cut-rank preserving admissibility are identical to the previous one for $\OR^{-1}$.

    For $\mathsf{wk}$ and $\mathsf{cont}$: the proofs of the height and cut-rank preserving admissibility can be adapted from those in~\cite{brunnler_deep_2009} without any special treatment of the additional rules in $\DIA_{\mathsf{k}{\mathsf{X}}}$.

    The inverse rules of $\DIA$, $\DIA_{\mathsf{k}j}$ and $\mathsf{cut}$ are cases of weakening from which we can also deduce is cut-rank and height-preserving admissible.
\end{proof}

To conclude this section we state the completeness of $\mathsf{nK} + \DIA_{\mathsf{k}{\mathsf{X}}} + \mathsf{cut}$ with respect to the quasi-transitive modal logics.

\begin{theorem} \label{thm:cutcompleteness}
    Let $\mathsf{X} \subseteq  \{n \in \mathbb{N}: n>1 \}$. Let $A$ be a formula. If $\mathsf{K} + \path{\mathsf{X}} \vdash A$, then $\mathsf{nK} + \DIA_{\mathsf{kX}} + \mathsf{cut} \vdash A$.
\end{theorem}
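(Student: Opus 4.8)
The plan is to proceed by induction on the length of a Hilbert-style derivation of $A$ in $\K + \path{\mathsf{X}}$, maintaining the invariant that every formula appearing in the derivation is provable as a (one-formula) nested sequent in $\mathsf{nK} + \DIA_{\mathsf{k}\mathsf{X}} + \mathsf{cut}$. This splits the work into two parts: (i) each axiom instance of $\K + \path{\mathsf{X}}$ is provable in the nested system, and (ii) the rules $\mathsf{mp}$ and $\mathsf{nec}$ are simulated by nested derivations. I expect the modal axiom cases to carry the real content, the propositional ones being routine.

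For the axioms, each propositional axiom is a classical tautology, and I would derive it bottom-up by applying the invertible rules $\OR$ and $\AND$ until only literals remain, closing each leaf with $\mathsf{id}$ (a valid disjunction of literals must contain a complementary pair), equivalently invoking admissibility of $\mathsf{gid}$ from Proposition~\ref{prop:invadm}. The axiom $\mathsf k$, whose negation normal form is $\DIA(A \AND \BAR B) \OR \DIA \BAR A \OR \BOX B$, is derived by applying $\OR$, then $\BOX$ to expose $[B]$, then the $\DIA$ rule twice to copy $A \AND \BAR B$ and $\BAR A$ into the box, then $\AND$, leaving leaves closed by $\mathsf{gid}$. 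The central case is $\path{n} : \DIA^n A \IMP \DIA A$ for $n \in \mathsf{X}$, whose negation normal form is $\BOX^n \BAR A \OR \DIA A$: after one $\OR$ I apply the $\BOX$ rule $n$ times to reach $\DIA A, [\ldots[\BAR A]\ldots]$ with $\BAR A$ nested under $n$ boxes and $\DIA A$ at the top level, then apply $\DIA_{\mathsf{k}n}$ with empty outer context and $\Delta_1 = \cdots = \Delta_{n-1} = \varnothing$, $\Delta_n = \BAR A$ to insert $A$ beside $\BAR A$ at depth $n$, closing the resulting leaf $\DIA A, [\ldots[A, \BAR A]\ldots]$ by $\mathsf{gid}$. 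All axioms are proved cut-free; cut is needed only to simulate $\mathsf{mp}$.

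For the rules, to simulate $\mathsf{nec}$ I would first establish a boxing lemma — if $\Gamma$ is provable then so is $[\Gamma]$ — by induction on proof height, reinterpreting each rule instance in the deeper context $[\ \cdot\ ]$; every rule, including $\mathsf{cut}$, is stated for an arbitrary context and so survives one extra layer of nesting. Given $\vdash A$, boxing yields $\vdash [A]$, and a single $\BOX$ step gives $\vdash \BOX A$. To simulate $\mathsf{mp}$ from $\vdash A$ and $\vdash A \IMP B = \BAR A \OR B$, I would apply the height-preserving admissible rule $\OR^{-1}$ to obtain $\vdash \BAR A, B$, weaken $\vdash A$ to $\vdash A, B$ using admissibility of $\mathsf{wk}$ (both from Proposition~\ref{prop:invadm}), and then apply $\mathsf{cut}$ on $A$ in the context $B, \{\ \}$ to conclude $\vdash B$; the cut-rank introduced is exactly $\deg(A)$.

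The main obstacle is getting the quasi-transitivity case exactly right: one must check that peeling $n$ boxes off $\BOX^n \BAR A$ via the $\BOX$ rule produces precisely the nesting pattern that $\DIA_{\mathsf{k}n}$ expects, so that the inserted copy of $A$ lands at the same depth $n$ as $\BAR A$ and the leaf closes under $\mathsf{gid}$; this is where the match between the shape of the axiom and the design of the propagation rule from~\cite{gore_correspondence_2011} is genuinely used. A secondary, bookkeeping-level point is justifying derivability of the propositional axioms in full generality, which rests on admissibility of $\mathsf{gid}$ together with invertibility of the propositional connective rules.
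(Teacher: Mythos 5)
Your proposal is correct and takes essentially the same route as the paper: the paper's proof handles the one genuinely new case, the axiom $\path{n}$, exactly as you do ($\OR$, then $n$ applications of $\BOX$, then $\DIA_{\mathsf{k}n}$ with empty $\Delta_1,\dots,\Delta_{n-1}$, closing with the admissible $\mathsf{gid}$ of Proposition~\ref{prop:invadm}). The only difference is that the paper outsources everything else to~\cite{brunnler_deep_2009} --- the derivability of the axioms and rules of $\K$ in $\mathsf{nK}+\mathsf{cut}$ --- whereas you spell out those simulations explicitly (propositional axioms and $\mathsf k$ via invertible rules and $\mathsf{gid}$, $\mathsf{mp}$ via $\OR^{-1}$, $\mathsf{wk}$ and $\mathsf{cut}$, and $\mathsf{nec}$ via a boxing lemma), all of which are correct.
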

\begin{proof}
Knowing from~\cite{brunnler_deep_2009} that the axioms and rules of $\K$ are derivable in $\mathsf{nK} + \mathsf{cut}$, we only need to show that for any $n \ge 1$, axiom $\path{n}$ is derivable using the rule $\DIA_{\mathsf{k}n}$. 
    $$
    \vlderivation
    {
        \vlin{\OR}{}
        {
            \DIA \BAR{A} \OR \BOX^n A
        }
        {
            \vlin{\BOX}{\text{\footnotesize $n$ times}}
            {
                \DIA \BAR{A} , \BOX^n A
            }
            {
                \vlin{\DIA_{\mathsf{k}n}}{}
                {
                    \DIA \BAR{A} , [\dots [A] \dots]
                }
                {
                    \vlin{\mathsf{gid}}{}
                    {
                        \DIA \BAR{A} , [\dots [\BAR{A}, A] \dots]
                    }
                    {
                        \vlhy{}
                    }
                }
            }
        }
    }
    $$
    We can conclude by admissibility of $\mathsf{gid}$ (Proposition~\ref{prop:invadm}).
\end{proof}

Note that the system $\mathsf{nK} + \DIA_{\mathsf{k}{\mathsf{X}}}$ is modularly complete \emph{with $\mathsf{cut}$}.
This is in contrast with the systems from~\cite{gore_correspondence_2011} which are cut-free complete but require a notion of completion of the rules wrt.~the set of axioms, and therefore cannot be considered modular. 
Next, we will show exactly how the necessity for the completion arises in the cut-elimination procedure.

\section{Syntactic Cut-elimination}\label{sec:cutelim}
\subsection{System Completion and Structural rules}
Given a set of quasi-transitivity axioms, the systems proposed in~\cite{gore_correspondence_2011} require to complete this set with additional quasi-transitivity axioms.  
We are able to simplify their notion of completion, originally defined using the algebra of the propagation graphs of nested sequents, as we are exclusively working with quasi-transitivity axioms rather than generic path axioms.

\begin{definition}\label{def:completion}
	Let $\mathsf{X} \subseteq  \{n \in \mathbb{N}: n>1 \}$. 
	We define the \emph{completion} of $\mathsf{X}$, denoted $\Hat{\mathsf{X}}$, inductively as follows: for $p \in \mathbb{N}$, define
	\begin{align*}
		\mathsf{X}_0 &\colonequals \mathsf{X} \\
		\mathsf{X}_{p+1} &\colonequals \mathsf{X}_p \cup \{m+n-1 :m,n \in \mathsf{X}_n\} 
        \\
		\Hat{\mathsf{X}} &\colonequals \bigcup_{p=0}^{\infty} \mathsf{X}_p
	\end{align*}
\end{definition}

We will utilise the following results.
\begin{proposition}
	Let $\mathsf{X} \subseteq  \{n \in \mathbb{N}: n>1 \}$. If $m,n \in \Hat{\mathsf{X}}$, then $m+n-1 \in \Hat{\mathsf{X}}$.
\end{proposition}
\begin{proof}
	The proof follows from the definition of the system completion of $\mathsf{X}$. 
\end{proof}

For the syntactic cut-elimination procedure, we need structural rules corresponding to the quasi-transitivity axioms 
which we introduce in Figure~\ref{fig:struct}. 
\begin{figure}
	\centering
	$
	\vlinf{\boxtimes_{\mathsf{k}n}}{}
	{
		\seq{\Gamma}{[\Delta_1, [\dots,[\Delta_{n-1},[\Delta_n, \Delta]] \dots]]}
	}
	{
		\seq{\Gamma}{[\Delta], [\Delta_1, [\dots,[\Delta_{n-1},[\Delta_j]] \dots]]}
	}
	$
	\caption{Modal structural rule $\boxtimes_{\mathsf{k}n}$ where $n \geq 1$}
	\label{fig:struct}
\end{figure}
Given $\mathsf{X} \subseteq  \{  n \in \mathbb{N} : n>1 \}$, we write $\boxtimes_{\mathsf{k}\mathsf{X}} \colonequals \{ \boxtimes_{\mathsf{k}n} : n \in \mathsf{X} \}$.

\begin{proposition}\label{prop:smallstrucadm}
    Let $\mathsf{X} \subseteq  \{n \in \mathbb{N}: n>1 \}$. Each rule in $\boxtimes_{\mathsf{k}\mathsf{X}}$ is cut-rank admissible in $\mathsf{nK} + \DIA_{\mathsf{k}\hat{\mathsf{X}}} + \mathsf{cut}$.
\end{proposition}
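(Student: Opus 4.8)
The plan is to argue by induction on the height of a proof $\pi'$ of the premise $\seq{\Gamma}{[\Delta], [\Delta_1, [\dots,[\Delta_{n-1},[\Delta_n]] \dots]]}$ of $\boxtimes_{\mathsf{k}n}$, permuting this rule upward through the last rule of $\pi'$ until it is absorbed. Every step I use is either an appeal to the induction hypothesis, which preserves cut-rank, or an application of a rule of $\mathsf{nK}+\DIA_{\mathsf{k}\hat{\mathsf{X}}}$ other than $\mathsf{cut}$; since no new cut is ever introduced, the resulting proof $\pi$ satisfies $r(\pi)\le r(\pi')$, which is exactly cut-rank admissibility.

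For the base case the premise is an instance of $\mathsf{id}$, so a pair $p,\bar p$ sits at a single world, lying either in $\Gamma$, inside $\Delta$, or inside one of the $\Delta_i$; in each case $\boxtimes_{\mathsf{k}n}$ relocates that world as a whole, so $p,\bar p$ remain together and the conclusion is again an instance of $\mathsf{id}$. In the inductive step, whenever the last rule $\rho$ of $\pi'$ acts on a region disjoint from the block moved by $\boxtimes_{\mathsf{k}n}$ — a logical rule, or a propagation rule whose target lies in $\Gamma$, strictly inside $\Delta$, or inside the chain $[\Delta_1,[\dots[\Delta_n]\dots]]$ — the two rules commute: I apply the induction hypothesis to the premise(s) of $\rho$ and reapply $\rho$ at the (possibly relocated) position. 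Rules with several premises, in particular $\mathsf{cut}$, are treated premise by premise, the cut-formula being untouched.

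The one delicate case, and precisely the one forcing the passage to the completion, is when $\rho$ is a propagation rule $\DIA_{\mathsf{k}m}$ with $m\in\hat{\mathsf{X}}$ whose principal formula $\DIA B$ sits at the root world $u$ (the hole of $\boxtimes_{\mathsf{k}n}$) and which propagates $B$ into the separate box $[\Delta]$, reaching a world at depth $m$ from $u$, that is, depth $m-1$ inside $\Delta$. The premise of $\rho$ then has the form $\seq{\Gamma}{[\Delta'], [\Delta_1,[\dots,[\Delta_{n-1},[\Delta_n]]\dots]]}$ with $\Delta'$ obtained from $\Delta$ by inserting a copy of $B$ at depth $m-1$; this is again a $\boxtimes_{\mathsf{k}n}$-premise, so the induction hypothesis yields a proof of $\seq{\Gamma}{[\Delta_1,[\dots,[\Delta_{n-1},[\Delta_n,\Delta']]\dots]]}$. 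After the move the block $\Delta'$ sits at depth $n$, so the inserted copy of $B$ now lies at depth $n+m-1$ from $u$. The plan is to delete it by a single application of $\DIA_{\mathsf{k}(n+m-1)}$: since $n\in\mathsf{X}\subseteq\hat{\mathsf{X}}$ and $m\in\hat{\mathsf{X}}$, the closure property of $\hat{\mathsf{X}}$ gives $n+m-1\in\hat{\mathsf{X}}$, so this rule belongs to the system, and its conclusion is exactly the target $\seq{\Gamma}{[\Delta_1,[\dots,[\Delta_{n-1},[\Delta_n,\Delta]]\dots]]}$. The ordinary rule $\DIA$ is the instance $m=1$, for which $n+m-1=n$.

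I expect the main obstacle to be the geometric bookkeeping in this last case: one must verify that the path running from $u$ through the chain down to the relocated copy of $B$ is a single linear nest of boxes of length $n+m-1$, so that deleting $B$ is a legitimate instance of $\DIA_{\mathsf{k}(n+m-1)}$ — the flat content at depth $n$ becoming $\Delta_n$ together with the off-path part of $\Delta$. The remaining cases are routine, but one must still confirm that moving $\Delta$ wholesale preserves every relative depth inside $\Delta$ and inside the chain, so that each rule of $\pi'$ acting there can indeed be reapplied unchanged.
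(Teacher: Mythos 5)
Your proof follows essentially the same strategy as the paper's: induction on the height of the proof of the premise, permuting $\boxtimes_{\mathsf{k}n}$ upward, with an identity base case, commuting cases, and the key case of a propagation crossing into the moved box, resolved by $\DIA_{\mathsf{k}(n+m-1)}$ together with closure of $\hat{\mathsf{X}}$ under $(m,n)\mapsto m+n-1$. (The paper factors the argument through a simplified rule $\boxtimes_{\mathsf{k}n}'$ in which all $\Delta_i$ are empty, recovering the general rule afterwards by admissible weakening and contraction, while you handle the general rule directly; that difference is cosmetic.)

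There is, however, a gap in your case analysis. Your ``delicate case'' only treats a propagation $\DIA_{\mathsf{k}m}$ whose principal formula $\DIA B$ sits at the root world $u$ itself. But $\DIA B$ may also sit at a world strictly above $u$, inside $\Gamma$, with its propagation path passing through $u$ and ending inside $\Delta$: say the source is $k\ge 1$ levels above $u$ and the target is $m-k-1$ levels inside $\Delta$. As written, you file this situation under the commuting cases (``a propagation rule whose target lies \dots strictly inside $\Delta$''), but it does not commute: after the relocation the target sits $n-1$ levels deeper than before, so $\DIA_{\mathsf{k}m}$ cannot simply be reapplied. Fortunately, your own delicate-case argument repairs it verbatim: the relocated occurrence of $B$ lies at depth $k+n+(m-k-1)=n+m-1$ below $\DIA B$, independently of $k$, so the same application of $\DIA_{\mathsf{k}(n+m-1)}\in\DIA_{\mathsf{k}\hat{\mathsf{X}}}$ closes this case as well. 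Note that the paper's Case III is set up at exactly this level of generality, with the principal $\DIA A$ an arbitrary number of levels above the moved box. Once you generalise your delicate case accordingly (and restrict ``commuting'' to propagations whose source and target both avoid crossing the boundary of $[\Delta]$), your proof is correct.
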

\begin{proof}
    Let us write 
    $$\vlinf{\boxtimes_{\mathsf{k}n}'}{}
        {
            \Gamma  \{ [ \dots [\Delta] \dots ]  \}
        }
        {
        \Gamma  \{ [\Delta]  \}
        }$$
    with $\text{depth}(\Gamma  \{ [ \dots [ \{ \  \}] \dots ]  \})=n$ for a simplified instance of $\boxtimes_{\mathsf{k}n}$ where all $\Delta_i$'s are empty (and weakening is built in).
    Suppose there is a proof $\pi$ of $\Gamma  \{ [\Delta]  \}$.
    We first show each $\boxtimes_{\mathsf{k}n}'$ is cut-rank admissible by induction on the height of the proof $\pi$.

    Base case $h(\pi) = 0$, $\pi$ is of the form:
    $$
    \begin{array}{ccc}
        \vlinf{\mathsf{id}}{}{\seq{\Gamma}{[\seq{\Delta}{p, \Bar{p}}]}}{} & \text{or} & \vlinf{\mathsf{id}}{}{\seq{\seq{\Gamma}{[\Delta]}}{p , \Bar{p}}}{}
    \end{array}
    $$
	for some $p \in \mathsf{Atm}$. In either case, we have the following proofs:
	$$
	\begin{array}{ccc}
		\vlinf{\mathsf{id}}{}{\seq{\Gamma}{[ \dots [\seq{\Delta}{p, \Bar{p}}] \dots ]}}{} & \text{or} & \vlinf{\mathsf{id}}{}{\seq{\seq{\Gamma}{[ \dots [\Delta] \dots ]}}{p , \Bar{p}}}{}
	\end{array}
	$$
        where we note cuts have not been introduced.
        
	Assume $\boxtimes_{\mathsf{k}n}'$ is cut-rank admissible for proofs $\pi$ of $h(\pi) < h$ for some natural number $h>0$.
	When $h(\pi)=h$, we look at the following cases:
    \begin{itemize}
        \item Case I: $\pi$ is of the form:
        $$
        \vlderivation
        	{
        		\vliiin{\rho}{}
        		{
        			\seq{\Gamma}{[\Delta]}
        		}
        		{
        			\vlhtr{\pi_1}{\seq{\Gamma_1}{[\Delta_1]}}
        		}
        		{
        			\vlhy{\dots}
 				}
        		{
        			\vlhtr{\pi_l}{\seq{\Gamma_l}{[\Delta_l]}}
        		}
        	}
        $$
        for some rule $\rho$, some proofs $\pi_1, \dots , \pi_l$. Then we have
        $$
             	\vlderivation
        	{
        		\vliiin{\rho}{}
        		{
        			\seq{\Gamma}{[\dots [\Delta] \dots]}
        		}
        		{
        			\vlin{\boxtimes_{\mathsf{k}n}'}{*}
        			{
        				\seq{\Gamma_1}{[\dots [\Delta_1] \dots]}
        			}
        			{
        				\vlhtr{\pi_1}{\seq{\Gamma_1}{[\Delta_1]}}
        			}
        		}	
        		{
        			\vlhy{\dots}
        		}
        		{
        			\vlin{\boxtimes_{\mathsf{k}n}'}{*}
        			{
        				\seq{\Gamma_l}{[\dots [\Delta_l] \dots]}
        			}
        			{
        				\vlhtr{\pi_l}{\seq{\Gamma_l}{[\Delta_l]}}
        			}
        		}
        	}
        $$ 
        where $*$ denotes where we have used the inductive hypothesis on $h(\pi_i) \leq \text{max}(h(\pi_1), \dots, h(\pi_m)) < h(\pi)=h$. Here, the cut-rank has been preserved.
        \item Case II: $\pi$ is of the form:
        $$
        \vlderivation
        	{
        		\vlin{\DIA}{}
        		{
        			\seq{\Gamma}{\DIA A, [\Delta]}
        		}
        		{
        			\vlhtr{\pi'}{\seq{\Gamma}{\DIA A, [A, \Delta]}}
        		}
        	}
        $$
        Then we have
        $$
                	\vlderivation
        	{
        		\vlin{\DIA_{\mathsf{k}n}}{}
        		{
        			\seq{\Gamma}{\DIA A, [\dots [\Delta] \dots ]}
        		}
        		{
        			\vlin{\boxtimes_{\mathsf{k}n}'}{*}
        			{
        				\seq{\Gamma}{\DIA A, [ \dots [A, \Delta] \dots ]}
        			}
        			{
        				\vlhtr{\pi'}{\seq{\Gamma}{\DIA A, [A, \Delta]}}
        			}
        		}
        	}
        $$
        where $*$ denotes where we have used the inductive hypothesis on $h(\pi') < h(\pi)=h$. Here, additional cuts have not been introduced.
        \item Case III: $\pi$ is of the form:
        $$
        \vlderivation
        	{
        		\vlin{\DIA_{\mathsf{k}m}}{}
        		{
        			\seq{\Gamma}{\DIA A, [\Delta_1, [\dots , [\Delta_{i-1}, [\seq{\Delta_i}{\varnothing}]] \dots ]]}
        		}
        		{
        			\vlhtr{\pi'}{\seq{\Gamma}{\DIA A, [\Delta_1, [\dots , [\Delta_{i-1}, [\seq{\Delta_i}{A}]] \dots ]]}}
        		}
        	}
        $$
        where $\text{depth}(\seq{\Delta_i}{\ }) = m-i$. 
        We note that $$\text{depth}(\seq{\Gamma}{\DIA A, [\Delta_1, [\dots , [\Delta_{i-1}, [ \dots [\seq{\Delta_i}{\ }] \dots ]] \dots ]]}) = m+n-1$$ As $\DIA_{\mathsf{k}m}, \DIA_{\mathsf{k}n} \in \DIA_{\mathsf{k}\mathsf{X}}$, by the definition of completion, $\DIA_{\mathsf{k}(m+n-1)} \in \DIA_{\mathsf{k}\mathsf{X}}$ and we have the following proof:
        $$
        \vlderivation
        {
        	\vlin{\DIA_{\mathsf{k}(m+n-1)}}{}
        	{
        		\seq{\Gamma}{\DIA A, [\Delta_1, [\dots , [\Delta_{i-1}, [ \dots [\seq{\Delta_i}{\varnothing}] \dots ]] \dots ]]}
        	}
        	{
        		\vlin{\boxtimes_{\mathsf{k}n}'}{*}
        		{
        			\seq{\Gamma}{\DIA A, [\Delta_1, [\dots , [\Delta_{i-1}, [ \dots [\seq{\Delta_i}{A}] \dots ]] \dots ]]}
        		}
        		{
        			\vlhtr{\pi'}{ \seq{\Gamma}{\DIA A, [\Delta_1, [\dots , [\Delta_{i-1}, [\seq{\Delta_i}{A}]] \dots ]]}}
        		}
        	}
        }
        $$
        where $*$ denotes where we have used the inductive hypothesis on $h(\pi') < h(\pi) = h$.
    \end{itemize}
	
    We show each $\boxtimes_{\mathsf{k}n}$ is cut-rank admissible by using that each $\boxtimes_{\mathsf{k}n}'$ is cut-rank admissible and using the cut-rank admissibility of $\mathsf{cont}$ and $\mathsf{wk}$.
\end{proof}

The following is a stronger statement which follows from Proposition~\ref{prop:smallstrucadm}.

\begin{proposition} \label{prop:strucadm}
    Let $\mathsf{X} \subseteq  \{n \in \mathbb{N}: n>1 \}$. Each rule in $\boxtimes_{\mathsf{k}\hat{\mathsf{X}}}$ is cut-rank admissible in $\mathsf{nK} + \DIA_{\mathsf{k}\hat{\mathsf{X}}} + \mathsf{cut}$.
\end{proposition}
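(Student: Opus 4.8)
The plan is to derive Proposition~\ref{prop:strucadm} directly from Proposition~\ref{prop:smallstrucadm} by instantiating the latter at the \emph{completed} set and then invoking idempotence of the completion operation. The point is that Proposition~\ref{prop:smallstrucadm} upgrades $\boxtimes_{\mathsf{k}\mathsf{X}}$ to admissibility, but only for the rules indexed by the original $\mathsf{X}$; to reach all of $\boxtimes_{\mathsf{k}\hat{\mathsf{X}}}$ we feed the already-completed set back into the same result.

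First I would observe that Proposition~\ref{prop:smallstrucadm} is stated uniformly for every $\mathsf{X} \subseteq \{n \in \mathbb{N} : n > 1\}$, so it suffices to check that $\hat{\mathsf{X}}$ is again such a subset before substituting it for $\mathsf{X}$. This is immediate: if $m, n > 1$ then $m + n - 1 \geq 3 > 1$, so the completion construction never produces a value outside $\{n \in \mathbb{N} : n > 1\}$, and hence $\hat{\mathsf{X}} \subseteq \{n \in \mathbb{N} : n > 1\}$. Applying Proposition~\ref{prop:smallstrucadm} with $\hat{\mathsf{X}}$ in place of $\mathsf{X}$ then yields that each rule in $\boxtimes_{\mathsf{k}\hat{\mathsf{X}}}$ is cut-rank admissible in $\mathsf{nK} + \DIA_{\mathsf{k}\hat{\hat{\mathsf{X}}}} + \mathsf{cut}$.

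Second I would prove that the completion is idempotent, i.e.\ $\hat{\hat{\mathsf{X}}} = \hat{\mathsf{X}}$, which is exactly what identifies the system above with the desired one. The inclusion $\hat{\mathsf{X}} \subseteq \hat{\hat{\mathsf{X}}}$ is definitional, since the completion of any set contains that set as its stage $0$. For the reverse inclusion I would use the proposition stating that $\hat{\mathsf{X}}$ is closed under $(m,n) \mapsto m+n-1$: running the completion construction starting from $\hat{\mathsf{X}}$, a short induction on $p$ shows that every stage $(\hat{\mathsf{X}})_p$ equals $\hat{\mathsf{X}}$, because the inductive step only adjoins elements $m + n - 1$ with $m, n \in \hat{\mathsf{X}}$, all of which already lie in $\hat{\mathsf{X}}$ by that closure property. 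Taking the union over $p$ gives $\hat{\hat{\mathsf{X}}} = \hat{\mathsf{X}}$, whence $\mathsf{nK} + \DIA_{\mathsf{k}\hat{\hat{\mathsf{X}}}} + \mathsf{cut}$ is literally $\mathsf{nK} + \DIA_{\mathsf{k}\hat{\mathsf{X}}} + \mathsf{cut}$ and the statement follows.

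There is no genuinely hard step here, as the heavy lifting (the induction over proof height, with its three cases, including the use of $\DIA_{\mathsf{k}(m+n-1)}$) has already been absorbed into Proposition~\ref{prop:smallstrucadm}. The only point requiring care is the idempotence argument, and specifically applying the closure property to the \emph{already-completed} set rather than redoing the height induction from scratch; getting this bookkeeping right is what makes the present proposition a one-line consequence instead of a second full admissibility proof.
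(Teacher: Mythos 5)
Your proof is correct, but it takes a genuinely different route from the paper's. The paper proves Proposition~\ref{prop:strucadm} by induction on the structure of $\hat{\mathsf{X}}$ (on the stage at which an index enters the completion), the key proof-theoretic step being that the structural rule $\boxtimes_{\mathsf{k}(m+l-1)}$ is derivable from $\boxtimes_{\mathsf{k}m}$ and $\boxtimes_{\mathsf{k}l}$ together with weakening; new structural rules are thus obtained by composing already-admissible ones inside the fixed system $\mathsf{nK} + \DIA_{\mathsf{k}\hat{\mathsf{X}}} + \mathsf{cut}$. You instead exploit the fact that Proposition~\ref{prop:smallstrucadm} is uniform in $\mathsf{X}$ and that its ambient system is already indexed by the completion: instantiating it at $\hat{\mathsf{X}}$ and then proving idempotence, $\hat{\hat{\mathsf{X}}} = \hat{\mathsf{X}}$, via the paper's (unnamed) closure proposition that $m,n \in \hat{\mathsf{X}}$ implies $m+n-1 \in \hat{\mathsf{X}}$, closes the argument with purely set-theoretic bookkeeping and no new derivations. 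Both routes ultimately rest on that same closure property; yours buys brevity and maximal reuse (nothing proof-theoretic is re-proved, and the instantiation is legitimate since, as you check, $\hat{\mathsf{X}} \subseteq \{n \in \mathbb{N} : n>1\}$), while the paper's buys a rule-composition fact of independent interest --- structural rules compose under weakening, mirroring the composition of quasi-transitivity axioms --- which foreshadows the analogous composition used for the $\DIA_{\mathsf{4}n}$ rules in Lemma~\ref{lem:sysred}.
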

\begin{proof}
    This is an induction on the structure of $\hat{\mathsf{X}}$. The argument follows from the fact that the structural rule $\boxtimes_{\mathsf{k}(m+l-1)}$ is derivable from $\boxtimes_{\mathsf{k}m}$ and $\boxtimes_{\mathsf{k}l}$ with weakening:
 \begin{equation*}
 \vlderivation
 {
    \vlin{\boxtimes_{\mathsf{k}l}}{}
    {
        \seq{\Gamma}{[\Delta_1, [\dots,[\Delta_{m+l-2},[\Delta_{m+l-1}, \Delta]] \dots]]}
    }
    {
        \vlin{\boxtimes_{\mathsf{k}m} + \mathsf{wk}}{}
        {
            \seq{\Gamma}{[\Delta_1, [\dots,[\Delta_{m}, [\Delta], [\dots, [\Delta_{m+l-2},[\Delta_{m+l-1}, \Delta]]\dots]] \dots]]}
        }
        {
            \vlhy{\seq{\Gamma}{[\Delta], [\Delta_1, [\dots,[\Delta_{m+l-2},[\Delta_{m+l-1}]] \dots]]}}
        }
    }
 }
 \end{equation*}
\end{proof}

\subsection{Cut-elimination Theorem}
In this subsection, we present the cut-elimination procedure. 
where we make use of the admissibility of the modal structural rules to reason about cut-reductions.

\begin{lemma}[Cut reduction]\label{lem:cutred}
    Let $\mathsf{X} \subseteq  \{n \in \mathbb{N}: n>1 \}$. Let $\Gamma$ be a sequent. If $\mathsf{nK} + \DIA_{\mathsf{k}\hat{\mathsf{X}}} + \mathsf{cut}_{r+1} \vdash \Gamma$, then $\mathsf{nK} + \DIA_{\mathsf{k}\hat{\mathsf{X}}} + \mathsf{cut}_{r} \vdash \Gamma$
\end{lemma}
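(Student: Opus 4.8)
The plan is to reduce the maximal cut-rank by one via the usual two-level induction. At the outer level I would induct on the number of cuts in the given proof whose cut-rank is exactly $r+1$. Choosing an uppermost such cut, every cut strictly above it has rank $\le r$, so its two premises $\Gamma\{A\}$ and $\Gamma\{\BAR{A}\}$, where $\text{deg}(A)=r+1$, are derivable with cut-rank $\le r$; if I can replace this single cut by a derivation of $\Gamma\{\varnothing\}$ of cut-rank $\le r$, the number of rank-$(r+1)$ cuts strictly decreases and the outer induction closes. Hence all the content lies in the following \emph{single-cut reduction}: given $\pi_1 \vdash \Gamma\{A\}$ and $\pi_2 \vdash \Gamma\{\BAR{A}\}$ of cut-rank $\le r$ with $\text{deg}(A)=r+1$, produce a proof of $\Gamma\{\varnothing\}$ of cut-rank $\le r$. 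I would prove this by induction on $h(\pi_1)+h(\pi_2)$.

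For the \emph{commutative} cases — when $A$ is not principal in the last rule of $\pi_1$, or $\BAR{A}$ not principal in the last rule of $\pi_2$ — I would permute the cut above that rule. As the rule acts in a context disjoint from the hole, its premises still carry the cut formula, so I apply the induction hypothesis to each (the height sum having dropped) and then reapply the rule; the height- and cut-rank-preserving admissibility of weakening, contraction and the inverse rules from Proposition~\ref{prop:invadm} is what realigns the contexts, and base cases where a premise is an instance of $\mathsf{id}$ are discharged by (deep) weakening. The \emph{propositional principal} cases ($A=B\wedge C$ or $A=B\vee C$) are standard: the cut is replaced by cuts on $B$ and $C$, which have degree $<r+1$ and hence cut-rank $\le r$.

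The crux is the \emph{modal principal} case, say $A=\BOX B$ and $\BAR{A}=\DIA\BAR{B}$ (the symmetric orientation being identical). Then $\pi_1$ ends in a $\BOX$ rule with premise $\pi_1'\vdash\Gamma\{[B]\}$, while $\pi_2$ ends in a $\DIA_{\mathsf{k}n}$ rule acting on $\DIA\BAR{B}$, so the hole of $\Gamma\{\ \}$ sits beside a chain of $n$ nested boxes and $\pi_2$'s premise $\pi_2'$ deposits a fresh $\BAR{B}$ at depth $n$ while $\DIA\BAR{B}$ \emph{persists}. I would resolve this in three moves: (i) weaken a copy of $\BAR{B}$ into $\pi_1$ at depth $n$ and cut the resulting proof of $\BOX B$ against $\pi_2'$; this is again a cut on the degree-$(r+1)$ formula $A$, but its premises have height sum strictly below $h(\pi_1)+h(\pi_2)$, so the induction hypothesis turns it into a cut-rank-$\le r$ derivation that discharges the persisting $\DIA\BAR{B}$ and leaves $\BAR{B}$ at depth $n$; (ii) apply the structural rule $\boxtimes_{\mathsf{k}n}$ to $\pi_1'$ to push $B$ from its depth-$1$ box down into the same depth-$n$ box, which is cut-rank-$\le r$ admissible by Proposition~\ref{prop:strucadm} precisely because $n\in\hat{\mathsf{X}}$; (iii) cut $B$ against $\BAR{B}$ deep inside the chain, a cut of rank $\text{deg}(B)=r$. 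The result is exactly $\Gamma\{\varnothing\}$ with cut-rank $\le r$.

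The main obstacle is this modal case, for two reasons. First, the persistence of the principal $\DIA$-formula means no single principal step removes it; I sidestep this by recursing on $\pi_2$'s immediate premise in move (i), so that the height-sum measure still decreases. Second, the depth mismatch between the $\BOX$ rule (depth $1$) and $\DIA_{\mathsf{k}n}$ (depth $n$) can only be repaired by the structural rule $\boxtimes_{\mathsf{k}n}$, whose admissibility (Proposition~\ref{prop:strucadm}) is exactly what forces us to work over the completed index set $\hat{\mathsf{X}}$ rather than $\mathsf{X}$. This is the precise point at which the need for completion, anticipated in the remark after Theorem~\ref{thm:cutcompleteness}, materialises inside the cut-elimination argument.
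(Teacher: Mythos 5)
Your proposal is correct and follows essentially the same route as the paper's proof: an outer induction on the number of rank-$(r+1)$ cuts, an inner induction on $h(\pi_1)+h(\pi_2)$ with the commutative and propositional cases handled Br\"unnler-style via Proposition~\ref{prop:invadm}, and in the key $\DIA_{\mathsf{k}n}$ case the same three moves --- weakening followed by a cut against $\pi_2'$ justified by the height-sum induction hypothesis (to absorb the persisting diamond formula), repositioning the boxed formula at depth $n$ via the cut-rank admissible structural rule $\boxtimes_{\mathsf{k}n}$ (Proposition~\ref{prop:strucadm}, which is indeed where the completion $\hat{\mathsf{X}}$ is needed), and a final cut of rank $r$. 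The only cosmetic difference is that the paper obtains the premise $\Gamma\{[\BAR{A}], \dots\}$ by applying the admissible inverse rule $\BOX^{-1}$ to $\pi_1$, rather than assuming $\pi_1$ ends with the $\BOX$ rule as you do, thereby avoiding a case split on the last rule of $\pi_1$.
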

\begin{proof}
    We proceed to prove this result by induction on the number of cuts of rank $r+1$ in a proof. The base case, when there are no cuts of rank $r+1$ in a proof is immediate.

	Assume the result holds for proofs with the number of cuts of rank $r+1$ up to some natural number $s>0$.
	
	Given a proof $\pi$ with $s$ cuts of rank $r+1$, there is a cut of rank $r+1$, such that there is a subproof of $\pi$ of the form
	$$
	\vlderivation
	{
		\vliin{\mathsf{cut}_{r+1}}{}
		{
			\seq{\Gamma}{\varnothing}
		}
		{
			\vlhtr{\pi_1}{\seq{\Gamma}{A}}
		}
		{
			\vlhtr{\pi_2}{\seq{\Gamma}{\BAR{A}}}
		}
	}
	$$
	for some formula $A$ of degree less than or equal to $r+1$, and some proofs $\pi_1$ and $\pi_2$ in $\mathsf{nK} + \DIA_{\mathsf{k}\hat{\mathsf{X}}} + \mathsf{cut}_{r}$. 
    We proceed by induction on $h(\pi_1)+h(\pi_2)$. The base case can be found in \cite{brunnler_deep_2009}. 
    In the case when $\BAR{A}$ is not principal in the last rule applied in $\pi_2$, i.e., that rule does not affect the cut-formula, or is the result of a rule in $\mathsf{nK}+\mathsf{cut}_r$, the cut-reduction can be found in \cite[Lemma 11]{brunnler_deep_2009}. This utilises the fact that the inverses of the rules in $\mathsf{nK} + \mathsf{cut}_r$ are height and cut-rank preserving admissible using Proposition~\ref{prop:invadm}.

    In the case where the final rule in $\pi_2$ is $\DIA_{\mathsf{k}n}$ and the cut is of the form:
    $$
    \vlderivation
        {
            \vliin{\mathsf{cut}_{r+1}}{}
            {
                \Gamma  \{ 
[\Delta_1, [\dots, [\Delta_{n-1},[\Delta_{n}]] \dots ]] \}
            }
            {
                \vlhtr{\pi_1}{\Gamma  \{ 
\BOX \BAR A, [\Delta_1, [\dots, [\Delta_{n-1},[\Delta_{n}]] \dots ]] \}}
            }
            {
                \vlin{\DIA_{\mathsf{k}n}}{}
                {
                \Gamma  \{ 
\DIA A, [\Delta_1, [\dots, [\Delta_{n-1},[\Delta_{n}]] \dots ]] \}
                }
                {
                    \vlhtr{\pi_2'}{\Gamma  \{ 
\DIA A, [\Delta_1, [\dots, [\Delta_{n-1},[A, \Delta_{n}]] \dots ]] \}}
                }
            }
        }
    $$

    We have a proof denoted $\pi_1'$ of $\Gamma  \{ 
[\BAR{A}], [\Delta_1, [\dots, [\Delta_{n-1},[\Delta_{n}]] \dots ]] \}$ in $\mathsf{nK}+\DIA_{\mathsf{k}\hat{\mathsf{X}}}+\mathsf{cut}_r$ which is obtained by applying the cut-rank admissible rule $\BOX^{-1}$ on the proof $\pi_1$ of $\Gamma  \{ 
\BOX \BAR{A}, [\Delta_1, [\dots, [\Delta_{j-1},[\Delta_{j}]] \dots ]] \}$ using Proposition~\ref{prop:invadm}. We note $r(\pi_1') \le r(\pi_1)$.
   
   We then have a proof denoted $\pi_3$ of $\Gamma  \{ 
[\Delta_1, [\dots, [\Delta_{j-1},[\BAR{A}, \Delta_{j}]] \dots ]] \}$ in $\mathsf{nK}+\DIA_{\mathsf{k}\hat{\mathsf{X}}}+\mathsf{cut}_r$, obtained from proof $\pi_1'$ by utilising the cut-rank admissibility of the structural rule~$\boxtimes_{\mathsf{k}n}$ using Proposition~\ref{prop:strucadm}. We note $r(\pi_3) \le r(\pi_1') \le r(\pi_1)$.

     On the other hand, we have a proof denoted $\pi_1''$ of $\Gamma  \{ 
\BOX \BAR{A}, [\Delta_1, [\dots, [\Delta_{n-1},[A, \Delta_{n}]] \dots ]] \}$ in $\mathsf{nK}+\DIA_{\mathsf{k}\hat{\mathsf{X}}}+\mathsf{cut}_r$ which is obtained by applying the cut-rank height-preserving admissible rule $\mathsf{wk}$ on the proof $\pi_1$ of $\Gamma  \{ 
\BOX \BAR{A}, [\Delta_1, [\dots, [\Delta_{n-1},[\Delta_{n}]] \dots ]] \}$ using Proposition~\ref{prop:invadm}. We note $h(\pi_1'')\leq h(\pi_1)$.

    We then get a proof $\pi_4$ of $\Gamma  \{ 
    [\Delta_1, [\dots, [\Delta_{n-1},[\bar A, \Delta_{n}]] \dots ]] \}$ in $\mathsf{nK}+\DIA_{\mathsf{k}\hat{\mathsf{X}}}+\mathsf{cut}_r$ from the proofs $\pi_1''$ and $\pi_2'$ utilising the inductive hypothesis:
    \begin{equation*}
        \vlderivation
    {
        \vliin{\mathsf{cut}_{r+1}}{*}
                {
                    \Gamma  \{ 
[\Delta_1, [\dots, [\Delta_{n-1},[A, \Delta_{n}]] \dots ]] \}
                }
                {
                    \vlhtr{\pi_1''}{\Gamma  \{ 
\BOX \BAR{A}, [\Delta_1, [\dots, [\Delta_{n-1},[A, \Delta_{n}]] \dots ]] \}}
                }
                {
                    \vlhtr{\pi_2'}{\Gamma  \{ \DIA A,
[\Delta_1, [\dots, [\Delta_{n-1},[A, \Delta_{n}]] \dots ]] \}}
                }
    }
    \end{equation*}
  
$*$ denotes where we have used the inductive hypothesis on $h(\pi_1'') + h(\pi_2') < h(\pi_1) + h(\pi_2)$.

And so we have a proof of $\Gamma  \{ 
[\Delta_1, [\dots, [\Delta_{n-1},[\Delta_{n}]] \dots ]] \}$ in $\mathsf{nK}+\DIA_{\mathsf{k}\hat{\mathsf{X}}}+\mathsf{cut}_r$:
        $$
        \vlderivation
        {
            \vliin{\mathsf{cut}_{r}}{}
            {
                \Gamma  \{ 
[\Delta_1, [\dots, [\Delta_{n-1},[\Delta_{n}]] \dots ]] \}
            }
            {
                \vlhtr{\pi_3}{\Gamma  \{ 
[\Delta_1, [\dots, [\Delta_{n-1},[\BAR{A}, \Delta_{n}]] \dots ]] \}}
            }
            {
                \vlhtr{\pi_4}{\Gamma  \{ 
[\Delta_1, [\dots, [\Delta_{n-1},[A, \Delta_{n}]] \dots ]] \}}
            }
        }$$
    
    It is similar when the final rule of $\pi_1$ is $\DIA_{\mathsf{k}n}$.
    
    Applying this transformation on the proof of $\Gamma$ reduces the number of cuts of rank $r+1$ by 1 and we can apply the external inductive hypothesis to achieve the result.
\end{proof}

\begin{theorem}[Cut-elimination] \label{thm:cutelim}
    Let $\mathsf{X} \subseteq  \{n \in \mathbb{N}: n>1 \}$. Let $\Gamma$ be a sequent. If $\mathsf{nK} + \DIA_{\mathsf{k}\hat{\mathsf{X}}} + \mathsf{cut} \vdash \Gamma$, then $\mathsf{nK} + \DIA_{\mathsf{k}\hat{\mathsf{X}}}  \vdash \Gamma$.
\end{theorem}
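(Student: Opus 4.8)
The plan is to obtain the theorem as a direct corollary of the Cut reduction lemma (Lemma~\ref{lem:cutred}) by an outer induction on the cut-rank of the proof. All the genuine proof-theoretic content has already been localised in that lemma — in particular the permutation cases and the delicate $\DIA_{\mathsf{k}n}$ principal case, which is driven by the admissibility of the structural rules $\boxtimes_{\mathsf{k}n}$ (Proposition~\ref{prop:strucadm}) — so at this level only a well-foundedness argument remains.

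First I would note that any proof $\pi$ of $\Gamma$ in $\mathsf{nK} + \DIA_{\mathsf{k}\hat{\mathsf{X}}} + \mathsf{cut}$ is a finite tree and hence contains only finitely many instances of $\mathsf{cut}$, each with a cut-formula of finite degree; therefore its cut-rank $r(\pi)$ is a well-defined natural number, and $\pi$ is in fact a proof in $\mathsf{nK} + \DIA_{\mathsf{k}\hat{\mathsf{X}}} + \mathsf{cut}_{r(\pi)}$. Then I would induct on $r(\pi)$. In the inductive step, where $r(\pi) = r+1$, I would apply Lemma~\ref{lem:cutred} to transform $\pi$ into a proof of $\Gamma$ in $\mathsf{nK} + \DIA_{\mathsf{k}\hat{\mathsf{X}}} + \mathsf{cut}_{r}$, whose cut-rank is at most $r < r+1$; the induction hypothesis then yields a cut-free proof. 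Iterating, the cut-rank is driven monotonically down to zero and every cut of positive rank is removed. It is worth stressing that the ambient rule set never changes during this process: because we work throughout over the \emph{completed} index set $\hat{\mathsf{X}}$, the system is already closed under the reductions — this closure is precisely what Lemma~\ref{lem:cutred} exploits — so each reduction stays inside $\mathsf{nK} + \DIA_{\mathsf{k}\hat{\mathsf{X}}} + \mathsf{cut}_{r}$ without ever having to enlarge $\hat{\mathsf{X}}$ mid-proof.

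I do not anticipate any serious obstacle here, since the hard permutations are entirely absorbed into Lemma~\ref{lem:cutred}. The one point that requires a little care is the base case of the induction: one must check that once the cut-rank has been brought to $0$ the remaining cuts, whose cut-formulas are atomic, genuinely vanish rather than merely failing to decrease in rank. Since an atom can only be principal through an $\mathsf{id}$ rule and has no proper subformulas, such a cut either permutes upward or meets an instance of $\mathsf{id}$ and disappears without generating new cuts; this is exactly the atomic/$\mathsf{id}$ base reduction inherited from~\cite{brunnler_deep_2009}, on which the final step of the argument rests.
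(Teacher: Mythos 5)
Your proposal is correct and follows essentially the same route as the paper: an outer induction on the cut-rank of the proof, invoking Lemma~\ref{lem:cutred} at each step to bring the rank down from $r+1$ to $r$, with the rule set fixed at $\hat{\mathsf{X}}$ throughout. You are in fact slightly more careful than the paper, whose proof simply declares a cut-rank-$0$ proof to be one in $\mathsf{nK} + \DIA_{\mathsf{k}\hat{\mathsf{X}}}$, whereas under the paper's definition (cut-rank of a cut $=$ degree of the cut-formula) a rank-$0$ proof may still contain atomic cuts; your explicit appeal to the atomic/$\mathsf{id}$ reduction from~\cite{brunnler_deep_2009} closes that small gap.
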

\begin{proof}
    We proceed by induction on the maximal cut-rank of a proof of $\Gamma$. The base case $0$ is a proof in $\mathsf{nK} + \DIA_{\mathsf{k}\hat{\mathsf{X}}}$. Assume proofs of maximal cut-rank $r$ for some natural number $r>0$ can be reduced to a cut-free one.

    Suppose we have a proof of maximal cut-rank $r+1$. Using Lemma~\ref{lem:cutred}, the proof can be reduced to a proof of maximal cut-rank $r$. Applying the inductive hypothesis reduces the proof to a cut-free proof.
\end{proof}

The following is a result from \cite{gore_correspondence_2011}. The previous theorem provides an alternative proof.

\begin{corollary}[Cut-free completeness]\label{cor:firstresult}
    Let $\mathsf{X} \subseteq  \{n \in \mathbb{N}: n>1 \}$. Let $A$ be a formula. Then, if $\mathsf{K} + \path{\mathsf{X}} \vdash A$, then $\mathsf{nK} + \DIA_{\mathsf{k}\hat{\mathsf{X}}}  \vdash A$.
\end{corollary}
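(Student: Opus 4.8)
The plan is to obtain the cut-free proof by chaining the cut-completeness result with the cut-elimination theorem. First I would invoke Theorem~\ref{thm:cutcompleteness}: from the hypothesis $\mathsf{K} + \path{\mathsf{X}} \vdash A$ we immediately obtain a proof of $A$ in $\mathsf{nK} + \DIA_{\mathsf{k}\mathsf{X}} + \mathsf{cut}$, where the cut rule may still be present.

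The key observation bridging the two systems is that $\mathsf{X} \subseteq \hat{\mathsf{X}}$, which is immediate from Definition~\ref{def:completion}, since $\mathsf{X}_0 = \mathsf{X}$ and $\hat{\mathsf{X}} = \bigcup_{p} \mathsf{X}_p$. Consequently $\DIA_{\mathsf{k}\mathsf{X}} \subseteq \DIA_{\mathsf{k}\hat{\mathsf{X}}}$, so any derivation in $\mathsf{nK} + \DIA_{\mathsf{k}\mathsf{X}} + \mathsf{cut}$ is \emph{a fortiori} a derivation in $\mathsf{nK} + \DIA_{\mathsf{k}\hat{\mathsf{X}}} + \mathsf{cut}$; no rules need to be altered, we simply view the proof inside the larger rule set. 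Finally I would apply the Cut-elimination Theorem (Theorem~\ref{thm:cutelim}) to this proof, yielding a proof of $A$ in $\mathsf{nK} + \DIA_{\mathsf{k}\hat{\mathsf{X}}}$ with no cuts, which is exactly the desired conclusion.

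There is no genuine obstacle remaining here, as all the substantive work has been carried out in the preceding results. The only point that requires care is the \emph{order} in which the two ingredients are composed: one must first pass to the completed rule set $\DIA_{\mathsf{k}\hat{\mathsf{X}}}$ and only then invoke cut-elimination. This is essential because Theorem~\ref{thm:cutelim} eliminates cuts only when the propagation rules are already closed under the completion operation, so that the structural rules $\boxtimes_{\mathsf{k}n}$ stay admissible across all the cut-reductions (cf.~Proposition~\ref{prop:strucadm}); applying cut-elimination directly to $\mathsf{nK} + \DIA_{\mathsf{k}\mathsf{X}} + \mathsf{cut}$ would not work without the completion. This is precisely the sense in which the completion of~\cite{gore_correspondence_2011} arises naturally from the cut-elimination argument, as anticipated at the end of Section~\ref{sec:prelim}.
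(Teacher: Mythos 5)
Your proposal is correct and follows exactly the paper's own argument: the paper likewise obtains the result by composing Theorem~\ref{thm:cutcompleteness} with Theorem~\ref{thm:cutelim}. Your explicit note that $\mathsf{X} \subseteq \hat{\mathsf{X}}$ (hence $\DIA_{\mathsf{k}\mathsf{X}} \subseteq \DIA_{\mathsf{k}\hat{\mathsf{X}}}$) fills in a step the paper leaves implicit, and your remark on why the completion must be introduced before cut-elimination is accurate.
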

    
\begin{proof}
	This is as a consequence of Proposition~\ref{thm:cutcompleteness} and Theorem~\ref{thm:cutelim}.
\end{proof}

\section{Modularity}\label{sec:modular}
We have given an alternative proof that the systems from~\cite{gore_correspondence_2011} are cut-free complete for quasi-transitive modal logics.
However, they are not modular as, for $\mathsf{X}_1, \mathsf{X}_2 \subseteq  \{ n \in \mathbb{N} : n>1 \}$, the completion of $\mathsf{X}_1 \cup \mathsf{X}_2$ is not in general $\Hat{\mathsf{X}}_1 \cup \Hat{\mathsf{X}}_2$, meaning that one might need to add more rules to capture $\mathsf{4}^{\mathsf{X}_1 \cup \mathsf{X}_2}$ than just the ones occurring in $\DIA_{\mathsf{k}\Hat{\mathsf{X_1}}} \cup \DIA_{\mathsf{k}\Hat{\mathsf{X_2}}}$.

To achieve modularity we use new rules given in Figure~\ref{fig:new}. This rule ``propagate" formula $\DIA A$ through the nested sequent tree.

Given $\mathsf{X} \subseteq  \{ n \in \mathbb{N} : n>1 \}$, denote $\DIA_{\mathsf{4}\mathsf{X}} \colonequals \{ \DIA_{\mathsf{4}n} : n \in \mathsf{X} \}$.

In this new system, we avoid having to prove cut-elimination and we utilise the previous results. We conjecture that a direct cut-elimination would be possible in this system utilising a different cut-rule similar the $\mathsf{4cut}$ used in~\cite{brunnler_deep_2009}.

\begin{figure}
    \centering
    $\vlinf{\DIA_{\mathsf{4}n}}{}
    {\Gamma  \{ 
\DIA A, [\Delta_1, [\dots, [\Delta_{n-2},[\Delta_{n-1}]] \dots ]] \}}
    {\Gamma  \{ 
\DIA A, [\Delta_1, [\dots, [\Delta_{n-2},[\DIA A, \Delta_{n-1}]] \dots ]] \}}$
    \caption{Modal propagation rules $\DIA_{\mathsf{4}n}$ where $n>1$}
    \label{fig:new}
\end{figure}

\begin{proposition}
    Let $\mathsf{X} \subseteq  \{n \in \mathbb{N}: n>1 \}$. The $\mathsf{wk}$ rule is admissible in $\mathsf{nK} + \DIA_{\mathsf{4}\mathsf{X}}$.
\end{proposition}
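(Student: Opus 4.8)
The plan is to prove the admissibility of $\mathsf{wk}$ in $\mathsf{nK} + \DIA_{\mathsf{4}\mathsf{X}}$ by induction on the height of the proof of the premiss, following the standard pattern already used for $\OR^{-1}$ in the proof of Proposition~\ref{prop:invadm}. Concretely, suppose $\pi$ is a proof of $\Gamma\{\varnothing\}$; I would show that there is a proof of $\Gamma\{\Delta\}$ of height at most $h(\pi)$, for an arbitrary nested sequent $\Delta$ placed in the hole. Weakening here must be understood in the general contextual sense: $\Delta$ may be added at any position in the sequent tree, so the induction hypothesis should be stated uniformly over all contexts $\Gamma\{\ \}$ and all sequents $\Delta$.

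First I would treat the base case, where $\pi$ is an instance of $\mathsf{id}$, say $\Gamma\{p,\bar p\}\{\varnothing\}$; then $\Gamma\{p,\bar p\}\{\Delta\}$ is still an instance of $\mathsf{id}$, since the weakened material does not disturb the principal atoms. For the inductive step, I would case on the last rule $\rho$ applied in $\pi$. For the propositional and modal rules of $\mathsf{nK}$ ($\OR$, $\AND$, $\BOX$, $\DIA$) the argument is routine: the hole $\{\ \}$ where $\Delta$ is inserted is disjoint from the principal position of $\rho$ (or can be made so by viewing $\rho$'s active formula as sitting in a refined context), so one permutes the weakening above $\rho$, applies the induction hypothesis to each premiss, and reapplies $\rho$; height is preserved because each premiss has strictly smaller height. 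This mirrors exactly Case~II of the $\OR^{-1}$ argument.

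The genuinely new case, and the one I expect to be the main obstacle, is when the last rule is $\DIA_{\mathsf{4}n}$. The difficulty is that this rule copies a diamond formula $\DIA A$ deep into the sequent, and the position where we wish to weaken may interact with the nesting structure $[\Delta_1,[\dots,[\Delta_{n-1}]\dots]]$ that $\DIA_{\mathsf{4}n}$ manipulates. I would handle this by the same permutation strategy: push the weakening up to the premiss of $\DIA_{\mathsf{4}n}$, apply the induction hypothesis there, and then reapply $\DIA_{\mathsf{4}n}$ below. The point to check carefully is that inserting $\Delta$ into the context does not destroy the shape required for $\DIA_{\mathsf{4}n}$ to fire: if $\Delta$ is added at a position not on the propagation branch, the rule instance is undisturbed; if it is added on the branch, one observes that adding material to any $\Delta_i$ (or introducing an extra bracketed component) still leaves a valid $\DIA_{\mathsf{4}n}$ instance, since the rule only requires the presence of the displayed bracket nesting, not its exact contents. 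In every subcase the premiss after weakening has the same height as before, so $h$ is preserved and the induction goes through.

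Since the paper only claims plain admissibility (not height-preserving admissibility) in this proposition, a slightly cheaper route is also available: I could instead observe that $\mathsf{wk}$ is derivable by a syntactic argument tracking how each rule treats an arbitrary added context, again by induction on proof height, and simply not track the height bound. Either way the structure is identical; I would present the height-preserving version for uniformity with Proposition~\ref{prop:invadm}, with the $\DIA_{\mathsf{4}n}$ case as the only step warranting explicit display.
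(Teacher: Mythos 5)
Your proposal is correct and matches the paper's approach: the paper simply invokes Br\"unnler's standard height-preserving induction on the proof of the premiss, noting that the propagation rules $\DIA_{\mathsf{4}n}$ need no special treatment, which is exactly the permutation argument you spell out (your explicit check that weakening cannot destroy a $\DIA_{\mathsf{4}n}$ instance is the content of the paper's ``no particular adaptation'' remark).
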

\begin{proof}
    The proof follows the one in~\cite{brunnler_deep_2009}
    with no particular adaptation due to the additional modal propagation rules.
\end{proof}

\begin{proposition}\label{prop:translation}
    Let $\mathsf{X} \subseteq  \{n \in \mathbb{N}: n>1 \}$. Let $\Gamma$ be a sequent. If $\mathsf{nK} + \DIA_{\mathsf{k}\mathsf{X}} \vdash \Gamma$, then $\mathsf{nK} + \DIA_{\mathsf{4}\mathsf{X}}  \vdash \Gamma$.
\end{proposition}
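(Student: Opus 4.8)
The plan is to reduce the whole statement to a single local claim: that for every $n \in \mathsf{X}$ the rule $\DIA_{\mathsf{k}n}$ is admissible in $\mathsf{nK} + \DIA_{\mathsf{4}\mathsf{X}}$. Granting this, the proposition follows by a routine induction on the height of a given proof $\pi$ of $\Gamma$ in $\mathsf{nK} + \DIA_{\mathsf{k}\mathsf{X}}$: the base case $\mathsf{id}$ and every $\mathsf{nK}$-rule already belong to $\mathsf{nK} + \DIA_{\mathsf{4}\mathsf{X}}$ and are kept unchanged (applying the inductive hypothesis to the immediate subproofs and reassembling), while each instance of $\DIA_{\mathsf{k}n}$ is replaced by the simulation furnished by the local claim. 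So the real content is the local claim, and the rest is bookkeeping.

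The heart of the argument is the simulation gadget. The key observation is that $\DIA_{\mathsf{k}n}$ pushes the \emph{bare} formula $A$ all the way down to depth $n$, whereas $\DIA_{\mathsf{4}n}$ only pushes the \emph{boxed} formula $\DIA A$ down to depth $n-1$. I would recover the missing level using the ordinary $\DIA$ rule of $\mathsf{nK}$, and then discard the spurious copy of $\DIA A$ using admissible weakening, which is available by the immediately preceding proposition. Concretely, writing $P$ and $C$ for the premise and conclusion of a given instance of $\DIA_{\mathsf{k}n}$ and $\pi'$ for a proof of $P$ in $\mathsf{nK}+\DIA_{\mathsf{4}\mathsf{X}}$ obtained from the inductive hypothesis, I would build
\[
\vlderivation{
  \vlin{\DIA_{\mathsf{4}n}}{}
  {\seq{\Gamma}{\DIA A, [\Delta_1, [\dots, [\Delta_{n-1},[\Delta_n]] \dots]]}}
  {
    \vlin{\DIA}{}
    {\seq{\Gamma}{\DIA A, [\Delta_1, [\dots, [\DIA A, \Delta_{n-1},[\Delta_n]] \dots]]}}
    {
      \vlin{\mathsf{wk}}{*}
      {\seq{\Gamma}{\DIA A, [\Delta_1, [\dots, [\DIA A, \Delta_{n-1},[A, \Delta_n]] \dots]]}}
      {
        \vlhtr{\pi'}{\seq{\Gamma}{\DIA A, [\Delta_1, [\dots, [\Delta_{n-1},[A, \Delta_n]] \dots]]}}
      }
    }
  }
}
\]
where $*$ marks the use of admissibility of $\mathsf{wk}$. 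Reading bottom-up: the $\DIA_{\mathsf{4}n}$ step introduces a copy of $\DIA A$ at depth $n-1$ (treating the innermost block $\Delta_{n-1},[\Delta_n]$ of $C$ as the role of $\Delta_{n-1}$ in the schema of $\DIA_{\mathsf{4}n}$), so that this $\DIA A$ becomes a sibling of the box $[\Delta_n]$; the $\DIA$ step then uses that copy to propagate $A$ one level deeper into $[\Delta_n]$, reaching depth $n$; and the top step weakens the auxiliary $\DIA A$ into the proof $\pi'$ of $P$. Since $n \in \mathsf{X}$ guarantees $\DIA_{\mathsf{4}n} \in \DIA_{\mathsf{4}\mathsf{X}}$, all rules used lie in $\mathsf{nK}+\DIA_{\mathsf{4}\mathsf{X}}$.

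The main obstacle is not conceptual but the careful bookkeeping of nesting depths: one must verify that the innermost block $\Delta_{n-1},[\Delta_n]$ of $C$ matches the innermost slot of the $\DIA_{\mathsf{4}n}$ schema, so that the freshly inserted $\DIA A$ lands exactly as a sibling of $[\Delta_n]$ and the ordinary $\DIA$ rule genuinely applies. A secondary point worth making explicit is that weakening admissibility is truly needed to erase the leftover $\DIA A$ at depth $n-1$; this is precisely why the proposition is phrased over $\mathsf{nK}+\DIA_{\mathsf{4}\mathsf{X}}$, in which the preceding proposition grants $\mathsf{wk}$, rather than over a system with an explicit weakening rule. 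Once the depth matching is set up, the outer induction on proof height goes through without further difficulty.
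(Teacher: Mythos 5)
Your proposal is correct and takes essentially the same approach as the paper: the paper simulates each instance of $\DIA_{\mathsf{k}n}$ with exactly your gadget, namely $\DIA_{\mathsf{4}n}$ applied below an ordinary $\DIA$ step, below an application of admissible weakening that absorbs the auxiliary copy of $\DIA A$ at depth $n-1$. The only cosmetic difference is that the paper presents this as a derivation of the rule $\DIA_{\mathsf{k}n}$ and leaves the outer induction on proof height implicit, which you spell out.
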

\begin{proof}
    Each $\DIA_{\mathsf{k}n} \in \DIA_{\mathsf{k}\mathsf{X}}$ is derivable 
    in $\mathsf{nK} + \DIA_{\mathsf{4}\mathsf{X}}$:
    \begin{equation*}
\vlderivation
        {
            \vlin{\DIA_{\mathsf{4}n}}{}
            {
                \Gamma  \{ 
\DIA A, [\Delta_1, [\dots, [\Delta_{n-1},[\Delta_{n}]] \dots ]] \}   
            }
            {
                \vlin{\DIA}{}
                {
                    \Gamma  \{ 
\DIA A, [\Delta_1, [\dots, [\Delta_{n-1}, \DIA A, [\Delta_{n}]] \dots ]] \}
                }
                {
                    \vlin{\mathsf{wk}}{}
                    {
                        \Gamma  \{ 
\DIA A, [\Delta_1, [\dots, [\Delta_{n-1}, \DIA A, [A, \Delta_{n}]] \dots ]] \}
                    }
                    {
                        \vlhy{\Gamma  \{ 
\DIA A, [\Delta_1, [\dots, [\Delta_{n-1}, [A, \Delta_{n}]] \dots ]] \}}
                    }
                }
            }
        }
    \end{equation*}
\end{proof}

\begin{lemma}[System completion reduction]\label{lem:sysred}
     Let $\mathsf{X} \subseteq  \{n \in \mathbb{N}: n>1 \}$. Let $\Gamma$ be a sequent. If $\mathsf{nK} + \DIA_{\mathsf{4}\hat{\mathsf{X}}} \vdash \Gamma$, then $\mathsf{nK} + \DIA_{\mathsf{4}\mathsf{X}}  \vdash \Gamma$
\end{lemma}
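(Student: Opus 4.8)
The plan is to reduce the statement to a derivability fact about single rules: it suffices to show that, for every $k \in \hat{\mathsf{X}}$, the rule $\DIA_{\mathsf{4}k}$ is derivable in $\mathsf{nK} + \DIA_{\mathsf{4}\mathsf{X}}$ (making free use of the admissibility of $\mathsf{wk}$ in that system, established just above). Granting this, any proof in $\mathsf{nK} + \DIA_{\mathsf{4}\hat{\mathsf{X}}}$ is turned into one in $\mathsf{nK} + \DIA_{\mathsf{4}\mathsf{X}}$ by replacing each application of a rule $\DIA_{\mathsf{4}k}$, $k \in \hat{\mathsf{X}}$, with its derivation, working through the proof tree. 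To prove the derivability claim I would induct on the least stage $p$ with $k \in \mathsf{X}_p$ in Definition~\ref{def:completion}. If $k \in \mathsf{X}_0 = \mathsf{X}$ then $\DIA_{\mathsf{4}k} \in \DIA_{\mathsf{4}\mathsf{X}}$ and there is nothing to do. If $k \in \mathsf{X}_{p+1} \setminus \mathsf{X}_p$, then $k = m + n - 1$ with $m, n \in \mathsf{X}_p$; by the induction hypothesis $\DIA_{\mathsf{4}m}$ and $\DIA_{\mathsf{4}n}$ are already derivable, and it remains to derive $\DIA_{\mathsf{4}(m+n-1)}$ from them.

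This last derivation is the crux, and it parallels the one for $\boxtimes_{\mathsf{k}(m+l-1)}$ in Proposition~\ref{prop:strucadm}. The rule $\DIA_{\mathsf{4}(m+n-1)}$ propagates a copy of $\DIA A$ from the top level down to nesting depth $m+n-2$, and I would realise this as a composition of two propagations: first $\DIA_{\mathsf{4}m}$ carries $\DIA A$ from depth $0$ to depth $m-1$, and then $\DIA_{\mathsf{4}n}$, taking the freshly deposited occurrence at depth $m-1$ as its source, carries a copy a further $n-1$ levels down to depth $(m-1)+(n-1) = m+n-2$. Reading this composite upwards from the intended conclusion $\seq{\Gamma}{\DIA A, [\Delta_1, [\dots, [\Delta_{m+n-3},[\Delta_{m+n-2}]] \dots ]]}$, the topmost sequent carries $\DIA A$ at both depth $m-1$ and depth $m+n-2$, whereas the premiss of $\DIA_{\mathsf{4}(m+n-1)}$ carries it only at depth $m+n-2$. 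The extra occurrence at depth $m-1$ is supplied by the admissibility of $\mathsf{wk}$, exactly as weakening is folded into the $\boxtimes_{\mathsf{k}m}$ step of Proposition~\ref{prop:strucadm}.

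The only real obstacle is the context bookkeeping that justifies firing the two rules at the stated depths. For the $\DIA_{\mathsf{4}n}$ step one checks that the outer context down to depth $m-1$ instantiates the rule's context, the occurrence of $\DIA A$ at depth $m-1$ plays the role of its top-level diamond, and the remaining block $[\Delta_m, [\dots,[\Delta_{m+n-2}]\dots]]$ supplies precisely the $n-1$ nesting levels the rule consumes (available since $n > 1$). Dually, for the $\DIA_{\mathsf{4}m}$ step the entire deeper block from depth $m$ onwards is absorbed into the rule's innermost context $\Delta_{m-1}$, so that the schematic placement $[\DIA A, \Delta_{m-1}]$ deposits $\DIA A$ at the correct level, alongside that block. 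Since $m-1 \le m+n-2$ there is always enough nesting for both applications, and as derivability composes across the inductive stages of $\hat{\mathsf{X}}$, every $\DIA_{\mathsf{4}k}$ with $k \in \hat{\mathsf{X}}$ is derivable in $\mathsf{nK} + \DIA_{\mathsf{4}\mathsf{X}}$, which is what the lemma requires.
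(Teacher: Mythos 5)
Your proposal is correct and follows essentially the same route as the paper: induction on the stage $p$ at which an index enters $\hat{\mathsf{X}}$, with the inductive step realising $\DIA_{\mathsf{4}(m+n-1)}$ as a composition of $\DIA_{\mathsf{4}m}$ and $\DIA_{\mathsf{4}n}$ where admissible weakening supplies the intermediate copy of $\DIA A$ at depth $m-1$ (the paper's derivation is the mirror image, weakening in the copy at depth $l-1$, which is the same argument by commutativity of addition). The depth bookkeeping you spell out matches the paper's derivation exactly.
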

\begin{proof}
    We have $\Hat{\mathsf{X}} = \bigcup_{p=0}^{\infty} \mathsf{X}_p$ as given in Definition~\ref{def:completion}. We show each $\DIA_{\mathsf{4}n} \in \DIA_{\mathsf{4}\hat{\mathsf{X}}}$ is admissible in $\mathsf{nK} + \DIA_{\mathsf{4}\mathsf{X}}$. By definition, $\DIA_{\mathsf{4}n} \in \DIA_{\mathsf{4}\mathsf{X}_p}$ for some $p \in \mathbb{N}$. We show each $\DIA_{\mathsf{4}n} \in \DIA_{\mathsf{4}\mathsf{X}_p}$ is admissible $\mathsf{nK} + \DIA_{\mathsf{4}\mathsf{X}}$ by induction on $p$.

    The base case $p=0$ follows from the definition of $\mathsf{X}_0 = \mathsf{X}$.

    Assume the result holds for some $p \in \mathbb{N}$.

    Suppose $\DIA_{\mathsf{4}n} \in \DIA_{\mathsf{4}\mathsf{X}_{p+1}}$. By definition, we have $n \in \mathsf{X}_{p}$, or $n = m+l-1$ for some $m, l \in \mathsf{X}_p$.

    If $n \in \mathsf{X}_{p}$: $\DIA_{\mathsf{4}n} \in \DIA_{\mathsf{4}\mathsf{X}_{p}}$ and the result follows from the induction hypothesis.

    If $n = m+l-1$ for some $m, l \in \mathsf{X}_n$: As $\DIA_{\mathsf{4}m}, \DIA_{\mathsf{4}l} \in \DIA_{\mathsf{4}\mathsf{X}_p}$, they are admissible in $\mathsf{nK} + \DIA_{\mathsf{4}\mathsf{X}}$ by induction hypothesis, and we see the admissibility of $\DIA_{\mathsf{4}n}=\DIA_{\mathsf{4}(m+l-1)}$ in $\mathsf{nK} + \DIA_{\mathsf{4}\mathsf{X}}$ through the following derivation:
    \begin{equation*}
\vlderivation
{
    \vlin{\DIA_{\mathsf{4}l}}{*}
    {
        \Gamma  \{ 
\DIA A, [\Delta_1, [\dots, [\Delta_{m+l-3},[\Delta_{m+l-2}]] \dots ]] \}
    }
    {
        \vlin{\DIA_{\mathsf{4}m}}{*}
        {
            \Gamma  \{ 
\DIA A, [\Delta_1, [\dots, [\Delta_{l-1}, \DIA A,[\dots,[\Delta_{m+l-3},[\Delta_{m+l-2}]]]\dots] \dots ]] \}
        }
        {
            \vlin{\mathsf{wk}}{}
            {
            \Gamma  \{ 
\DIA A, [\Delta_1, [\dots, [\Delta_{l-1},\DIA A,[\dots,[\Delta_{m+l-3},[\DIA A, \Delta_{m+l-2}]]]\dots] \dots ]] \}
            }
            {
            \vlhy{\Gamma  \{ 
\DIA A, [\Delta_1, [\dots, [\Delta_{l-1}, [\dots,[\Delta_{m+l-3},[\DIA A, \Delta_{m+l-2}]]]\dots] \dots ]] \}}
            }
        }
    }
}
    \end{equation*}
    where $*$ denotes using the inductive hypothesis of the admissibility of $\DIA_{\mathsf{4}m}$ and $\DIA_{\mathsf{4}l}$.
\end{proof}

We require the following technical lemma to prove soundness.

\begin{lemma}\label{lem:soundnesslemma}
    Let $\mathsf{X} \subseteq  \{n \in \mathbb{N}: n>1 \}$. Let $\Gamma \{ \ \}$ be a context. Let $A$ and $B$ be formulas. If $\mathsf{K} + \path{\mathsf{X}} \vdash A \IMP B$, then $\mathsf{K} + \path{\mathsf{X}} \vdash \form(\Gamma \{ A \}) \IMP \form(\Gamma \{ B \})$.
\end{lemma}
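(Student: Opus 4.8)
The plan is to prove the statement by induction on the depth of the context $\Gamma\{\ \}$, exploiting the fact that everything provable in $\mathsf{K}$ is \emph{a fortiori} provable in $\mathsf{K} + \path{\mathsf{X}}$, so that the path axioms play no active role here—they merely come along for the ride in the ambient logic. Throughout I would rely on $\form$ being well-defined up to provable equivalence in classical propositional logic (reordering the multiset only permutes disjuncts), so that it suffices to establish the implication between any fixed choices of the interpreting formulas.

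For the base case $\text{depth}(\Gamma\{\ \}) = 0$, we have $\Gamma\{\ \} = \Gamma_1, \{\ \}$ for some hole-free nested sequent $\Gamma_1$, so $\form(\Gamma\{A\})$ is $A \OR \form(\Gamma_1)$ and $\form(\Gamma\{B\})$ is $B \OR \form(\Gamma_1)$. From the hypothesis $\mathsf{K} + \path{\mathsf{X}} \vdash A \IMP B$, the desired $(A \OR \form(\Gamma_1)) \IMP (B \OR \form(\Gamma_1))$ follows by pure propositional reasoning (monotonicity of $\OR$ in its left argument), which is available since $\mathsf{K} + \path{\mathsf{X}}$ extends classical propositional logic.

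For the inductive step, $\Gamma\{\ \} = \Gamma_1, [\Gamma_2\{\ \}]$ with $\text{depth}(\Gamma_2\{\ \}) = \text{depth}(\Gamma\{\ \}) - 1$, so that $\form(\Gamma\{A\}) = \form(\Gamma_1) \OR \BOX\form(\Gamma_2\{A\})$ and likewise for $B$. The induction hypothesis gives $\mathsf{K} + \path{\mathsf{X}} \vdash \form(\Gamma_2\{A\}) \IMP \form(\Gamma_2\{B\})$. The crux is then to push this implication under the $\BOX$: since the implication is a theorem, necessitation yields $\BOX(\form(\Gamma_2\{A\}) \IMP \form(\Gamma_2\{B\}))$, and combining this with the normality axiom $\mathsf{k}$ by modus ponens gives $\BOX\form(\Gamma_2\{A\}) \IMP \BOX\form(\Gamma_2\{B\})$. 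A final propositional monotonicity step (under $\form(\Gamma_1) \OR -$) then delivers $\form(\Gamma\{A\}) \IMP \form(\Gamma\{B\})$.

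The argument is otherwise routine; the only genuinely modal point—and thus the step I would isolate, perhaps as a separate remark—is the $\BOX$-monotonicity rule ``$\vdash C \IMP D$ implies $\vdash \BOX C \IMP \BOX D$'', which is exactly the combination of necessitation and $\mathsf{k}$ and is the sole place where modal (rather than propositional) reasoning is invoked. Notably, no induction on the formula structure of $A$ or $B$ is required, nor any property of the path axioms beyond the trivial inclusion $\mathsf{K} \subseteq \mathsf{K} + \path{\mathsf{X}}$.
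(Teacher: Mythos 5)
Your proof is correct and matches the paper's approach: the paper simply defers to \cite[Lemma~3]{brunnler_deep_2009}, whose argument is exactly your induction on the depth of the context, with propositional monotonicity of $\OR$ at depth $0$ and $\BOX$-monotonicity (necessitation plus the $\mathsf k$ axiom) at the inductive step. You have merely written out explicitly what the paper cites, including the correct observation that the axioms $\path{\mathsf{X}}$ play no active role beyond being present in the ambient logic.
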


\begin{proof}
    The proof follows the same steps as in~\cite[Lemma 3]{brunnler_deep_2009}.
\end{proof}

We conclude with soundness and cut-free completeness for a modular system for quasi-transitive modal logics.

\begin{theorem}\label{thm:modular}
    Let $\mathsf{X} \subseteq  \{n \in \mathbb{N}: n>1 \}$. Let $A$ be a formula. Then, 
    \begin{enumerate*}
    \item $\mathsf{K} + \path{\mathsf{X}} \vdash A$ iff 
    \item $\mathsf{nK} + \DIA_{\mathsf{4}\mathsf{X}}  \vdash A$.
    \end{enumerate*}
\end{theorem}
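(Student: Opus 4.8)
The plan is to prove the two implications separately. The ``only if'' direction, $(1)\Rightarrow(2)$, is pure assembly of results already in hand, whereas the ``if'' direction, $(2)\Rightarrow(1)$, is a soundness argument where the only genuinely new work occurs, namely justifying the propagation rule $\DIA_{\mathsf{4}n}$.

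For $(1)\Rightarrow(2)$ I would chain the earlier statements. Starting from $\K+\path{\mathsf{X}}\vdash A$, Corollary~\ref{cor:firstresult} gives $\mathsf{nK}+\DIA_{\mathsf{k}\hat{\mathsf{X}}}\vdash A$. Since $\mathsf{X}\subseteq\{n:n>1\}$ forces $\hat{\mathsf{X}}\subseteq\{n:n>1\}$ as well (any index $m+n-1$ added in the completion with $m,n>1$ is at least $3$), Proposition~\ref{prop:translation} applies with $\hat{\mathsf{X}}$ in place of $\mathsf{X}$ and yields $\mathsf{nK}+\DIA_{\mathsf{4}\hat{\mathsf{X}}}\vdash A$. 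Finally Lemma~\ref{lem:sysred} collapses the completion, giving $\mathsf{nK}+\DIA_{\mathsf{4}\mathsf{X}}\vdash A$, which is (2).

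For $(2)\Rightarrow(1)$ I would argue by induction on the height of a proof of $\Gamma$ in $\mathsf{nK}+\DIA_{\mathsf{4}\mathsf{X}}$, establishing $\K+\path{\mathsf{X}}\vdash\form(\Gamma)$; specialising to $\Gamma=A$ gives the claim. It suffices to check each rule is sound, i.e.\ that the conjunction of the premiss formulas implies the conclusion formula in $\K+\path{\mathsf{X}}$ (and that the conclusion formula of $\mathsf{id}$ is a theorem). For the propositional and modal rules of $\mathsf{nK}$ this is the routine argument of Brünnler, the passage through the surrounding context $\Gamma\{\ \}$ being handled by the monotonicity Lemma~\ref{lem:soundnesslemma}. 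The only rule that genuinely consumes an axiom of $\path{\mathsf{X}}$ is $\DIA_{\mathsf{4}n}$. Taking the hole of $\Gamma\{\ \}$ to contain the whole block $\DIA A,[\Delta_1,[\dots]]$, Lemma~\ref{lem:soundnesslemma} reduces soundness of this rule to the core implication
\[
\form\bigl(\DIA A, [\Delta_1, [\dots, [\Delta_{n-2}, [\DIA A, \Delta_{n-1}]] \dots]]\bigr)
\IMP
\form\bigl(\DIA A, [\Delta_1, [\dots, [\Delta_{n-2}, [\Delta_{n-1}]] \dots]]\bigr).
\]
Both sides carry the top-level disjunct $\DIA A$; the premiss side carries in addition one occurrence of $\DIA A$ nested under $n-1$ boxes. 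The key observation is that this deep occurrence can be transported to the top by iterating the $\K$-theorem $\BOX(\phi\OR\psi)\IMP(\DIA\phi\OR\BOX\psi)$ once per enclosing box: each step strips one box and prefixes one $\DIA$, so after crossing all $n-1$ boxes the single deep $\DIA A$ has become a top-level disjunct $\DIA^{n}A$. The quasi-transitivity axiom $\path{n}:\DIA^{n}A\IMP\DIA A$ then rewrites this to the top-level $\DIA A$ already present on the conclusion side, while monotonicity of $\BOX$ and $\OR$ reinstates the intermediate $\Delta_i$ harmlessly.

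The main obstacle is exactly this soundness of $\DIA_{\mathsf{4}n}$. Unlike the $\mathsf{nK}$ rules, the required implication is \emph{not} a local single-hole monotonicity step — deleting the deep $\DIA A$ is not $\K$-valid in isolation — so one cannot invoke Lemma~\ref{lem:soundnesslemma} at the deep position; the top-level $\DIA A$ must pay for the deep one. The essential insight is the interplay between the iterated normality principle, which converts \emph{depth} in the nesting into a \emph{power} of $\DIA$, and the axiom $\path{n}$, which is tuned precisely to absorb $\DIA^{n}A$ back into $\DIA A$. Getting the modality bookkeeping right — that crossing exactly $n-1$ boxes produces $\DIA^{n}A$, matching the antecedent of $\path{n}$ — is the crux; the remainder is a direct induction.
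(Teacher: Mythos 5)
Your proposal is correct and takes essentially the same route as the paper: the $(1)\Rightarrow(2)$ direction chains Corollary~\ref{cor:firstresult}, Proposition~\ref{prop:translation} (applied to $\hat{\mathsf{X}}$) and Lemma~\ref{lem:sysred} exactly as the paper does, and $(2)\Rightarrow(1)$ reduces, via Lemma~\ref{lem:soundnesslemma}, to the same core implication for $\DIA_{\mathsf{4}n}$. Your iteration of the $\K$-theorem $\BOX(\phi\OR\psi)\IMP(\DIA\phi\OR\BOX\psi)$, turning the $\DIA A$ nested under $n-1$ boxes into a top-level $\DIA^{n}A$ and absorbing it with $\path{n}$, is precisely the formula-level reading of the paper's iterated ``$\DIA+\mathsf{wk}$'' soundness steps followed by the same appeal to $\path{n}$ and transitivity of $\IMP$.
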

\begin{proof}
    $1 \Rightarrow 2$: By Corollary \ref{cor:firstresult} we have $\mathsf{nK} + \DIA_{\mathsf{k}\hat{\mathsf{X}}}  \vdash A$. By Proposition~\ref{prop:translation}, we have $\mathsf{nK} + \DIA_{\mathsf{4}\hat{\mathsf{X}}}  \vdash A$. The result follows from Lemma~\ref{lem:sysred}.

    $2 \Rightarrow 1$ requires to check the rules of $\mathsf{nK} + \DIA_{\mathsf{4}\mathsf{X}}$ are sound. Soundness of the rules of $\mathsf{nK}$ has been proven in~\cite{brunnler_deep_2009}. We prove the soundness of each rule $\DIA_{\mathsf{4}n} \in \DIA_{\mathsf{4}\mathsf{X}}$. 
    By Lemma~\ref{lem:soundnesslemma}, it suffices to show 
    \begin{multline*}
        \mathsf{K} + \path{\mathsf{X}} \vdash \form(\DIA A, [\Delta_1, [\dots, [\Delta_{n-2},[\DIA A, \Delta_{n-1}]] \dots ]])
        \\ \IMPLIES \form(\DIA A, [\Delta_1, [\dots, [\Delta_{n-2},[\Delta_{n-1}]] \dots ]])        \end{multline*}

    We utilise the soundness of $\DIA+ \mathsf{wk}$ with the following derivation tree:
    $$
    \vlderivation
    {
        \vlin{\DIA+ \mathsf{wk}}{}
        {
            \DIA A, \DIA^n A, [\Delta_1, [\dots, [\Delta_{n-2},[\Delta_{n-1}]] \dots ]]
        }
        {
            \vlin{\DIA+ \mathsf{wk}}{}
            {
                \DIA A, [\DIA^{n-1} A, \Delta_1, [\dots, [\Delta_{n-2},[\Delta_{n-1}]] \dots ]]
            }
            {
                \vlin{\DIA+ \mathsf{wk}}{}
                {
                    \vdots
                }
                {
                    \vlin{\DIA+ \mathsf{wk}}{}
                    {
                        \DIA A, [ \Delta_1, [\dots, [\DIA^{2} A, \Delta_{n-2},[\Delta_{n-1}]] \dots ]]
                    }
                    {
                        \vlhy{\DIA A, [ \Delta_1, [\dots, [ \Delta_{n-2},[\DIA A, \Delta_{n-1}]] \dots ]]}
                    }
                }
            }
        }
    }
    $$
    to deduce 
       \begin{multline*}
        \mathsf{K} + \path{\mathsf{X}} \vdash \form(\DIA A, [ \Delta_1, [\dots, [ \Delta_{n-2},[\DIA A, \Delta_{n-1}]] \dots ]])
        \\ \IMPLIES \form(\DIA A, \DIA^n A, [\Delta_1, [\dots, [\Delta_{n-2},[\Delta_{n-1}]] \dots ]])
    \end{multline*}
    By $\path{n}$: $\DIA^n A \IMPLIES \DIA A$, we have
    \begin{multline*}
        \mathsf{K} + \path{\mathsf{X}} \vdash \form(\DIA A, \DIA^n A, [\Delta_1, [\dots, [\Delta_{n-2},[\Delta_{n-1}]] \dots ]])
        \\ \IMPLIES \form(\DIA A, [\Delta_1, [\dots, [\Delta_{n-2},[\Delta_{n-1}]] \dots ]])
    \end{multline*}
    Applying the transitivity of $\IMPLIES$ we have the desired result.
\end{proof}

\bibliography{Library}
\bibliographystyle{plain}

\end{document}